\documentclass[reqno,12pt,letterpaper]{amsart}
\usepackage{amsmath,amssymb,amsthm,graphicx,mathrsfs,url}
\usepackage[usenames,dvipsnames]{color}
\usepackage[colorlinks=true,linkcolor=Red,citecolor=Green]{hyperref}
\usepackage{amsxtra}

\setlength{\marginparwidth}{0.6in}

\def\?[#1]{\textbf{[#1]}\marginpar{\Large{\textbf{??}}}}

\let\epsilon=\varepsilon 

\setlength{\textheight}{8.50in} \setlength{\oddsidemargin}{0.00in}
\setlength{\evensidemargin}{0.00in} \setlength{\textwidth}{6.08in}
\setlength{\topmargin}{0.00in} \setlength{\headheight}{0.18in}
\setlength{\marginparwidth}{1.0in}
\setlength{\abovedisplayskip}{0.2in}
\setlength{\belowdisplayskip}{0.2in}
\setlength{\parskip}{0.05in}

\newcommand{\RR}{{\mathbb R}}

\newcommand{\NN}{{\mathbb N}}
\newcommand{\CC}{{\mathbb C}}
\newcommand{\CI}{{{\mathcal C}^\infty}}
\newcommand{\CIc}{{{\mathcal C}^\infty_{\rm{c}}}}

\DeclareGraphicsRule{*}{mps}{*}{}

\newtheorem{thm}{Theorem}
\newtheorem{prop}{Proposition}

\newtheorem{lem}[prop]{Lemma}

\numberwithin{equation}{section}

\let\Im=\Imag

\DeclareMathOperator{\rank}{rank}

\let\Re=\Real

\DeclareMathOperator{\supp}{supp}

\DeclareMathOperator{\tr}{tr}

\title{Scattering resonances as viscosity limits}
\author{Maciej Zworski}
\email{zworski@math.berkeley.edu}
\address{Department of Mathematics, University of California,
Berkeley, CA 94720, USA}

\begin{document}

\maketitle

\begin{abstract}
Using the method of complex scaling we show that scattering 
resonances of $ - \Delta + V $, $ V \in L^\infty_{\rm{c}} ( \RR^n ) $, 
are limits of eigenvalues of $ - \Delta + V - i \epsilon x^2 $ as
$ \epsilon \to 0+ $. That justifies a method proposed in 
computational chemistry and reflects a general principle for 
resonances in other settings.
\end{abstract}

\section{Introduction and statement of results}

In this note we show that scattering resonances can be defined as viscosity limits,
that is limits of eigenvalues of Hamiltonians suitably regularized as
infinity. The detailed proofs are presented in the simplest case
of the Schr\"odinger operator with a compactly supported potential
and rely only on standard techniques.

We consider 
\[  P := - \Delta + V , \ \ V \in L^\infty_{\rm{comp}} ( \RR^n ) , \]
where $  L^\infty_{\rm{comp}} $ denotes the spaces of bounded functions
vanishing outside of some compact set. (Similarly the subscript 
$ L^\bullet_{\rm{loc}} $ denotes function in the space $ L^\bullet $ on 
compact sets.) The scattering resonances
are defined as the poles of the meromorphic continuation of
resolvent:
\[ R_V ( z ) := ( - \Delta + V - z)^{-1} : L^2 ( \RR^n)  \to L^2 ( \RR^n ) , \ \ 
\Im z > 0 , \]
from the upper half-plane through the continuous spectrum. More precisely,
\begin{equation}
\label{eq:Rofz} 
 R_V ( z ) : L^2_{\rm{comp} } ( \RR^n ) \to L^2_{\rm{loc} } ( \RR^n ) , 
 \end{equation}
continues meromorphically to the double cover of $ \CC $ when $ n $ is 
odd and to the logarithmic cover of $ \CC $ when $ n $ is even. 
The poles of this continuation coincide with the poles of the scattering
matrix for the potential $ V$. Their multiplicity (except at the 
threshold $ z = 0 $) are given by 
\begin{equation}
\label{eq:mz}  m ( z ) := \rank \oint_z R_V ( \zeta ) d \zeta , 
\end{equation}
where the integration is over a small circle around $ z $ -- see \cite[Chapter 3]{res}.

Equivalently, we can consider Green's function, that is the integral
kernel of $ R_V ( z ) $, 
\begin{equation}
\label{eq:Green} R_V  ( z ) f  ( x ) = \int_{ \RR^n } G (z , x , y ) f ( y ) dy , 
\end{equation}
and look at the poles of the continuation of $ z \mapsto G ( z , x , y ) $
for $ x $ and $ y $ fixed. Another way, based on the method of complex scaling, 
will be reviewed in \S \ref{cs}. 

We now consider a {\em regularized} operator, 
\begin{equation}
\label{eq:Peps} 
 P_\epsilon := - \Delta + V - i \epsilon x^2 , \ \ \epsilon > 0 . 
\end{equation}
It is easy to see (with details reviewed in \S \ref{mc}) that $ P_\epsilon $
is an unbounded operator on $ L^2 ( \RR^n ) $ with a discrete spectrum. 
We have

\begin{thm} \label{t:1} Suppose that $ \{ z_j ( \epsilon) \}_{j=1}^\infty $ are 
the eigenvalues of $ P_\epsilon $. Then, uniformly on  
compact subsets of $ \{ z : -\pi/4 < \arg z < 7 \pi/4 \} $, 
\[   z_j ( \epsilon ) \to z_j ,  \ \ \epsilon \to 0 + , \]
where $ z_j $ are the resonances of $ P $. 
\end{thm}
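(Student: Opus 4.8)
\emph{Proof strategy.} The plan is to combine the method of complex scaling, reviewed in \S\ref{cs}, with the observation that the regularizing term $-i\epsilon x^2$ turns into a genuinely \emph{confining} perturbation once the scaling angle is chosen to be $\pi/8$. I would first recall the outcome of \S\ref{cs}: for a fixed exterior scaling angle $\theta\in(0,\pi/2)$ the deformed operator $P_\theta$ has discrete spectrum in $\CC\setminus e^{-2i\theta}[0,\infty)$, and its eigenvalues there, counted with algebraic multiplicity, coincide with the resonances $z_j$ of $P$ together with their multiplicities \eqref{eq:mz}. The same deformation applies to $P_\epsilon$; but $P_\epsilon$ already has compact resolvent, so by analyticity of the family $\theta\mapsto U_\theta P_\epsilon U_\theta^{-1}$ (unitary for real $\theta$) the spectrum of the scaled operator $P_{\epsilon,\theta}$ does not depend on $\theta$ and equals $\{z_j(\epsilon)\}$. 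Hence it is enough to compare $P_{\epsilon,\theta}$ with $P_\theta$ for one convenient value of $\theta$.

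The key point is the choice $\theta=\pi/8$. Performing the exterior scaling $r\mapsto e^{i\theta}r$ outside a ball $B(0,R)\supset\supp V$, one computes that away from $B(0,R)$
\[ P_{\epsilon,\pi/8} = e^{-i\pi/4}\bigl(-\Delta\bigr) - i\epsilon\, e^{i\pi/4} r^2 = e^{-i\pi/4}\bigl(-\Delta+\epsilon r^2\bigr) , \]
since $-i\,e^{i\pi/4}=e^{-i\pi/4}$; inside $B(0,R)$ the operator is the original $P_\epsilon$, with a smooth interpolation in between. Thus for $\epsilon>0$ the scaled viscosity term $\epsilon r^2$ is an honest nonnegative confining potential, which re-proves the discreteness of $\Spec(P_\epsilon)$, while its numerical range lies in $e^{-i\pi/4}[0,\infty)$ up to the bounded contribution of $V$, and so does not intrude into compact subsets $K$ of $\{\,-\pi/4<\arg z<7\pi/4\,\}=\CC\setminus e^{-i\pi/4}[0,\infty)$. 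Letting $\epsilon\to0+$ replaces $e^{-i\pi/4}(-\Delta+\epsilon r^2)$ by $e^{-i\pi/4}(-\Delta)$, that is, $P_{\epsilon,\pi/8}\to P_{\pi/8}$ formally; note that the whole sector of the theorem is handled by this single angle, which is what makes the convergence uniform on its compact subsets.

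To make this rigorous I would: (i) establish a resolvent bound $\|(P_{\epsilon,\pi/8}-z)^{-1}\|\le C(z)$, uniform for $\epsilon\in(0,1]$ and $z$ in compact subsets of the resolvent set of $P_{\pi/8}$ inside $K$, by a gluing/parametrix construction that combines an interior parametrix near $\supp V$ (built from $P_\epsilon$, equivalently from $P$ since the two differ by the bounded term $-i\epsilon x^2$ there) with the exterior model $e^{-i\pi/4}(-\Delta+\epsilon r^2)-z$, whose invertibility off $e^{-i\pi/4}[0,\infty)$ is uniform in $\epsilon$; (ii) deduce $(P_{\epsilon,\pi/8}-z)^{-1}\to(P_{\pi/8}-z)^{-1}$ as $\epsilon\to0+$, locally uniformly on that set, from the second resolvent identity together with (i), using that the perturbation $-i\epsilon x^2$ is small once tested against compactly supported data — which is all one needs, since resonances are detected through the localized resolvent \eqref{eq:Rofz}; (iii) given a resonance $z_0$ of multiplicity $m(z_0)$, integrate over a small circle $\gamma$ around $z_0$ enclosing no other resonance: by (i) there are no eigenvalues of $P_{\epsilon,\pi/8}$ on $\gamma$ for $\epsilon$ small, and by (ii) $\oint_\gamma(P_{\epsilon,\pi/8}-\zeta)\,d\zeta\to\oint_\gamma(P_{\pi/8}-\zeta)^{-1}d\zeta$ in operator norm; these projections being finite rank, their ranks agree for $\epsilon$ small, so exactly $m(z_0)$ of the $z_j(\epsilon)$ lie inside $\gamma$, and the same uniform bound on $K\setminus\bigcup_j D(z_j,\delta)$ forbids eigenvalues anywhere else in $K$. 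This yields $z_j(\epsilon)\to z_j$ with multiplicities, uniformly on $K$.

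The main obstacle is step (i): the difference $P_{\epsilon,\pi/8}-P_{\pi/8}$ is unbounded, so the viscosity term cannot be treated as a perturbation in the naive norm sense. The resolution is that it acts \emph{at infinity}, exactly where $P_{\pi/8}$ is already elliptic and where the resolvent on $K$ is governed by the explicit exterior model; the technical heart of the argument is therefore to show that $e^{-i\pi/4}(-\Delta+\epsilon r^2)-z$ is invertible with bounds uniform in $\epsilon\in(0,1]$ for $z\in K$, and to check that the gluing error between this model and the interior parametrix is $o(1)$ as $\epsilon\to0+$.
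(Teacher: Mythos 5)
Your proposal identifies the same central mechanism as the paper: exterior complex scaling at the specific angle $\theta=\pi/8$, for which $-ie^{4i\theta}=1$, so that the rotated viscosity term $-i\epsilon e^{2i\theta}x^2$ combines with $-e^{-2i\theta}\Delta$ into $e^{-i\pi/4}(-\Delta+\epsilon x^2)$ near infinity. This is exactly the observation that drives the paper's Lemma \ref{l:4} (the uniform-in-$\epsilon$ bound on $(Q_{\epsilon,\theta}-z)^{-1}$, proved via a gluing/parametrix argument of precisely the type you sketch in (i)). So the hard analytic core of your step (i) is on target.

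The gap is in steps (ii)--(iii). You assert that
\[ \oint_\gamma(P_{\epsilon,\pi/8}-\zeta)^{-1}\,d\zeta \;\longrightarrow\; \oint_\gamma(P_{0,\pi/8}-\zeta)^{-1}\,d\zeta \quad\text{in operator norm} \]
and deduce rank stability. But the second resolvent identity gives
\[ (P_{\epsilon,\theta}-z)^{-1}-(P_{0,\theta}-z)^{-1}=(P_{\epsilon,\theta}-z)^{-1}\,(i\epsilon x_\theta^2)\,(P_{0,\theta}-z)^{-1}, \]
and even with your uniform bound from (i), the factor $x_\theta^2(P_{0,\theta}-z)^{-1}$ is \emph{not} a bounded operator on $L^2(\Gamma_\theta)$ — the free scaled resolvent does not gain decay in $x$. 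So norm convergence of the full resolvents, and hence of the full Riesz projections, fails. You flag the issue yourself (``small once tested against compactly supported data''), but the passage from localized convergence $\chi\Pi_\epsilon\chi\to\chi\Pi_0\chi$ to equality of the ranks of the untruncated projections $\Pi_\epsilon$, $\Pi_0$ requires an extra argument: one must show that $\rank(\chi\Pi_\epsilon\chi)=\rank\Pi_\epsilon$ for a fixed cutoff $\chi$, uniformly in $\epsilon$, which in turn needs uniform exponential decay of the (generalized) eigenfunctions of $P_{\epsilon,\theta}$. That is true but is itself a nontrivial elliptic estimate that your outline does not supply.

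The paper sidesteps this entirely by never comparing the unbounded-perturbation resolvents directly. Instead it factors the resolvent as in \eqref{eq:RVeps1} and computes multiplicities from the compactly localized Fredholm family $z\mapsto(I+VR_\epsilon(z)\rho)^{-1}$ via \eqref{eq:Pieps}. Lemma \ref{l:fbound} is precisely the localized resolvent difference bound — $\chi(R_\epsilon(z)-R_0(z))\chi=\mathcal O(\epsilon)$ — and the Gohberg--Sigal--Rouch\'e theorem applied to the Fredholm family (not to the resolvent) gives the pole/multiplicity stability with no need to control the full projections. If you want to keep your more direct route, you should either recast steps (ii)--(iii) through such a Fredholm family, or else add a lemma showing uniform exponential localization of the spectral subspaces of $P_{\epsilon,\theta}$ so that the cutoff does not lose rank.
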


\begin{figure}
\includegraphics{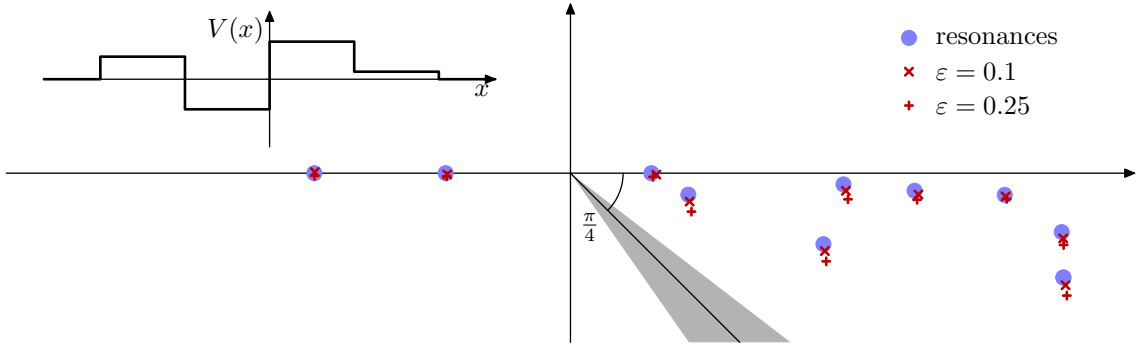} 
\caption{An illustration of the results of Theorem \ref{t:1} in the 
case of a specific potential shown on the left. Resonances
are computed using {\tt squarepot.m} \cite{BZ}. The eigenvalues
of $ P_\epsilon $, $ \epsilon = 1/4 $ and $ \epsilon = 1/10 $ 
are computing by approximating the unitarily equivalent operator
$ \epsilon^{\frac12} ( D_x^2 + \epsilon^{-\frac12} V ( \epsilon^{-\frac12} x ) 
- i x^2 ) $
using
a basis of eigenfunction of the harmonic oscillator $ D_x^2 + x^2 $.
}
\label{f:0}
\end{figure}

\noindent
{\bf Remarks.} 1. A more precise statement involving continuity of 
spectral projections is given in \S \ref{poc}. The term viscosity 
is motivated by the viscosity definition of Pollicott--Ruelle 
resonances given in Dyatlov--Zworski \cite{dam} -- see Example 3 below.

\noindent
2. When $ \epsilon < 0 $ the spectrum of $ P_\epsilon $ is given by 
complex conjugates of the spectrum of $ P_{- \epsilon } $. Hence
we have 
\begin{equation}
\label{eq:zje}
 z_j ( \epsilon ) \to \bar z_j , \ \  \epsilon \to 0 - , \end{equation}
 uniformly 
on compact subsets of $ \{ z : - 7 \pi/4 < \arg z < \pi/4 \}$. 

\noindent
3. The term $ - i \epsilon x^2 $ is an example of
a {\em complex absorbing potential} and other potentials can 
also be used -- see the discussion below. The proof here requires some analyticity properties of the complex absorbing potential.

\noindent
4. The restriction to $ \arg z > - \pi /4 $ when using $ - i \epsilon x^2 $
is due to the fact that for $ V \equiv 0 $ the spectrum of 
$ - \Delta - i \epsilon x^2 $ is given by $ \epsilon^{\frac12} e^{- \pi i/4} 
( 2 | \ell | + n ) $, $ \ell \in \NN^n $ which is a rescaled 
spectrum of the Davies harmonic oscillator -- see \S \ref{dho}. 
One can expand the range using $ \epsilon e^{ - i \alpha } x^2 $, $ 0 < \alpha < \pi $
in which case we recover resonances with $ \arg z > -\alpha/2  $. 

\noindent
5. The proof applies with simple modifications to compactly supported
{\em black box} perturbations on $ \RR^n $ introduced in \cite{SZ1} -- 
see \cite[Chapter 4]{res} and \cite{SjR}. In that case we need to 
replace $ - i \epsilon x^2 $ by $ - i \epsilon ( 1 - \chi ( x ) ) x^2 $ 
where $ \chi \in \CIc ( \RR^n ) $ is equal to $ 1 $ on a sufficiently 
large set -- see Example 2 below.

The computational method based on calculating eigenvalues of $ P_\epsilon $
was introduced in physical chemistry -- see 
Riss--Meyer \cite{RiMe} and Seideman--Miller \cite{semi} for the  
original approach and Jagau et al \cite{Jag} for some recent developments 
and references. 
However no rigorous mathematical treatments seem to be available and 
some  new interesting open questions can be posed -- see Example 4 below. 

Fixed complex absorbing potentials have already been used in mathematical literature
on scattering resonances. Stefanov \cite{SteCAP} showed that semiclassical 
resonances close to the real axis can be well approximated using 
eigenvalues of the Hamiltonian modified by a complex absorbing potential.
Nonnenmacher--Zworski \cite{NZ1},\cite{NZ2} used 
fixed complex absorbing potentials 
to study resonance problems employing gluing techniques of Datchev--Vasy
\cite{DV1},\cite{DV2}. Yet another application was given by Vasy in \cite{V}
where microlocal complex absorbing potentials were used to obtain Fredholm
properties and meromorphic continuation of the resolvents (see also 
\cite[Chapter 5]{res}). 

We conclude this section with some examples to which Theorem \ref{t:1} does {\em not} 
apply directly but which fit in the same framework. 

\medskip
\noindent
{\bf Example 1.} As explained in \cite[(c.31)--(c.33)]{SjDuke} the theory 
of Helffer--Sj\"ostrand \cite{He-Sj0} applies to the case of potentials
which are homogeneous of degree $ m $ and satisfy the condition
$  V ( x ) = 0 , \  x \neq 0  \Longrightarrow   \nabla V ( x ) \neq 0 $.
That means that resonances of $ P = - \Delta + V $ can be defined 
in $ \{ z \in \CC , \arg z > - \theta_0  \}$ for some $ \theta_0 > 0 $. 
It is interesting to ask if the viscosity limit gives a global definition 
in that case. 

That is easily seen in the case of quadratic potentials.
In fact, suppose that 
\[ V ( x ) = \lambda_1^2 x_1^2 + \cdots + \lambda_r^2 x_r^2 - 
\mu_1^2 x_{r+1}^2 - \cdots - \mu_{n-r}^2 x_n^2 , \ \ \lambda_j, \mu_\ell > 0 .\]
As recalled in \S \ref{dho} the eigenvalues of $ P_\epsilon $, $ \epsilon > 0 $,
are given by 
\[  \sum_{ j=1}^r ( \lambda_j^2 - i \epsilon)^{\frac12} ( 2 k_j + 1 ) - 
i \sum_{j=1}^{n-r} ( \mu_j^2 - i \epsilon)^{\frac12} ( 2 k_{ j+r} + 1) , \ \ 
k\in \NN_0^n , \]
where the branch of the square root is chosen to be positive on $ \RR_+ $. 
As $ \epsilon \to 0+ $ we obtain the globally defined set of resonances:
\[  \sum_{ j=1}^r \lambda_j ( 2 k_j + 1 ) - 
i \sum_{j=1}^{n-r} \mu_j ( 2 k_{ j+r} + 1) , \ \ 
k\in \NN_0^n . \]

\medskip
\noindent
{\bf Example 2.} This example fits in the framework of black box
scattering with one dimensional infinity. Consider the modular
surface $ M = SL_2 ( \mathbb Z ) \backslash \mathbb H^2 $ and $ \Delta_M \leq 0 $
the Laplacian  on $ M $. We then put $ P = - \Delta_M - \frac14 $ where
$ \frac14 $ guarantees that the the continuous spectrum of $ P$ is given 
$ [ 0  ,\infty ) $. This is a {\em black box} Hamiltonian 
on $ \mathcal H_0 \oplus L^2 ( [ 0 , \infty ) ) $ in the sense of
 \cite{SZ1} -- see 
\cite[\S 4.1]{res}. Traditionally, the resonances of the quotient 
$ M $ are defined
as poles of the meromorphic continuation of $ ( - \Delta_M - s ( 1 - s ))^{-1} $
from $ \Re s > \frac12 $ to $ \CC$ and are given by the embedded 
eigenvalues with $ \Re s = \frac12$ and by the non-trivial zeros of 
$ \zeta ( 2 s ) $ where $ \zeta $ is the Riemann zeta function. 
The resonances of $ P $ are then given by $ s (  1 - s ) $. 
 
If we choose the fundamental domain of $ SL_2 ( \mathbb Z ) $ to be 
$ \{ x + iy  : |x| \leq 1, x^2 + y^2 \geq 1 \} $ then the Laplacian 
in the cusp $ y > 1 $ is $ y^{-2}( \partial_x^2 + \partial_y^2 ) $.
The Hamiltonian on $ L^2 ( [ 0, \infty )_r ) $ is given by 
$ - \partial_r^2 $, $ r = \log y $ -- see \cite[\S 4.1, Example 3]{res}. 
In the language of Theorem~\ref{t:1} (see Remark 5) and in $ ( x, y ) $ coordinates
\[  P_\epsilon = - \Delta_M - \textstyle{\frac14} - i \epsilon ( 1 - \chi ( y ) ) ( 
\log y )^2 , \]
where $ \chi \in \CIc ( [0, \infty ) ) $, $ \chi ( y ) \equiv 1 $ for $ y < \frac32$
and $ \chi (y ) \equiv 1 $ for $ y > 2 $. The operator $ P_\epsilon $ has 
discrete spectrum for $ \epsilon > 0 $ and the eigenvalues converge to 
the resonances of $ P $ uniformly on compact subsets of $ \Im z > - \pi /4 $. 
Equivalently if we define $ \Sigma_\epsilon $
\[   s ( \epsilon ) \in \Sigma_\epsilon \Longleftrightarrow s( \epsilon ) ( 1 - 
s ( \epsilon ) ) \in \sigma ( P_\epsilon ) \]
The limit points of $ \Sigma_\epsilon $, $ \epsilon \to 0+ $, in 
$ \Re s < \frac12 $, $ |s | > C $ are given by the nontrivial 
zeros of $ \zeta ( 2s ) $.

\medskip
\noindent
{\bf Example 3.} Suppose that $ X $ is a compact manifold and $ V$ is 
a vector field on $ M $ generating an Anosov flow, $ \varphi^t = \exp t V $.
That means that the tangent space
to $ X $ has a continuous decomposition $ T_x X = E_0 ( x )                    
\oplus E_s (x) \oplus E_u ( x)$ which is invariant,
$ d \varphi_t ( x ) E_\bullet ( x ) = E_\bullet ( \varphi_t ( x ) ) $,
$E_0(x)=\mathbb R V(x)$,
and for some $ C$ and $ \theta > 0 $ fixed
\begin{equation}
\label{e:anosov}
\begin{split}
&  | d \varphi_t ( x ) v |_{\varphi_t ( x )}  \leq C e^{ - \theta              
  |t| } | v |_x  , \ \ v
\in E_u ( x ) , \ \
t < 0 , \\
& | d \varphi_t ( x ) v |_{\varphi_t ( x ) }  \leq C e^{ - \theta |t| } |  v |_x , \ \
v \in E_s ( x ) , \ \
t > 0 .
\end{split}
\end{equation}
where $ | \bullet |_y $ is given by a smooth Riemannian metric on $            
X $. A class of examples is given by $ X = T^1 M $ where $ M $ is 
a negatively curved Riemannian manifold and $ \varphi^t $ is the 
geodesic flow in its unit tangent bundle $ X $. 

If $ \Delta_g \leq 0 $ is the Laplacian for some metric on $ X $ 
then -- see \cite{dam} --  the limit set of the 
spectrum of 
\[   P_\epsilon = X/i + i \epsilon \Delta_g , \ \ \epsilon \to 0 + \]
is a discrete set given by the {\em Pollicott--Ruelle} resonances -- see
\cite{dam} for the definition and references. Adding the Laplacian 
corresponds to taking a {\em viscosity} regularization and that
explains our terminology. Another interpretation is given in terms
of Brownian motion: the pullback by the flow 
flow $ x ( t ) := \varphi_{t} ( x ( 0 ) ) $, is given by
$  e^{  i t P_0 } f ( x) = f (x (  t)  ) $, 
$ \dot x ( t ) = - V_{ x( t ) }$, $ x(0)=x$. 
For $ \epsilon > 0 $ the evolution equation is replaced by the
Langevin equation:
\[  e^{ - i t P_\epsilon  } f ( x ) = \mathbb E \left[ f (x (  t)  ) \right] ,\
 \ \
\dot x ( t ) = - V_{ x( t ) } + \sqrt{ 2 \epsilon } \dot B( t )  , \quad x(0)=\
x,\]
where $ B ( t ) $ is the Brownian motion corresponding to the metric $ g$ 
on $ X$. Hence considering
$ P_\epsilon $ corresponds to a stochastic perturbation of the deterministic flow.
In the case of scattering resonances the same interpretation can be proposed
on the Fourier transform side.

The assumption that the flow satisfies \eqref{e:anosov} is crucial as
otherwise the limit set is typically not discrete. The simplest
example is given by $ X = \mathbb S^1 \times \mathbb S^1 $,  $ \mathbb S^1 = 
\mathbb R / 2 \pi \mathbb Z $, 
and $ V = \partial_{x_1} + \alpha \partial_{x_2} $, $ \alpha \notin 
\mathbb Q $, $ \Delta_g = \partial_{x_1}^2 + \partial_{x_2}^2 $. In that
case the limit set of the spectrum of  $P_\epsilon $ is the lower half plane.
Other limit sets are possible, for instance in the case of the geodesic flow on 
$ \mathbb S^2 $, $ X = T^1 \mathbb S^2 \simeq SO(3)$. 
The spectrum of $ P_0 $ is given by $ \mathbb Z$
(with infinite multiplicities) and 
if we take $ \Delta_g $ 
to be the Casimir operator then the limit set of the spectrum of 
$ P_\epsilon $ as $ \epsilon \to 0 +$ is $ \mathbb Z - i [ 0 , \infty ) $.
For yet another example see \cite[\S 1]{dam}. 

\medskip
\noindent
{\bf Example 4.} We expect that viscosity definition of resonances 
remains valid, in a small angle near the real axis, for all 
{\em dilation analytic} potentials -- see \cite{He-Sj0} and references given there
and \S \ref{cs} below for a review of complex scaling. It would be interesting
to find a Schr\"odinger operator $ P$  for which the limit set of the spectrum of $ P_\epsilon $, $ \epsilon \to 0 $ is not discrete. Candidates are given by potentials
which are {\em not} dilation analytic, for instance, 
\[   - \partial_x^2 + \frac{ \sin x } x , \ \ x \in \RR . \]

\medskip
\noindent
{\bf Notation.}
We use the following notation: $ f =  \mathcal O_\ell (
 g   )_H $ means that
$ \|f \|_H  \leq C_\ell  g $ where the norm (or any seminorm) is in the
space $ H$, and the constant $ C_\ell  $ depends on $ \ell $. When either 
$ \ell $ or
$ H $ are absent then the constant is universal or the estimate is
scalar, respectively. When $ G = \mathcal O_\ell ( g ) : {H_1\to H_2 } $ then
the operator $ G : H_1  \to H_2 $ has its norm bounded by $ C_\ell g $.
Also when no confusion is likely to result, we denote the operator 
$ f \mapsto g f $ where $ g $ is a function by $ g $.

\medskip
\noindent
{\sc Acknowledgments.} The author would like to thank
Mike Christ, Semyon Dyatlov, Jeff Galkowski, John Strain and Joe Viola for helpful discussions.
This project was supported in part by 
the National Science Foundation grant DMS-1201417.

\section{Review of complex scaling}
\label{cs}

The complex scaling method changes the original Hamiltonian 
$ P = P_0 $ to a non-self-adjoint Hamiltonian $ P_{0,\theta }$ 
such that $  P_{0,\theta} - z : H^2 \to L^2 $ is a Fredholm 
operator when $ \arg z > - 2 \theta $. It was introduced 
by Aguilar--Combes \cite{AgCo}, Balslev--Combes \cite{BaCo} and Simon 
\cite{Si1}. For a review of practical applications of this method
in computational chemistry see Reinhardt \cite{Rei}. As the method
of {\em perfectly matched layers} (PML) it has reappeared in 
numerical analysis -- see Berenger \cite{ber}. The presentation 
here follows the geometric approach of Sj\"ostrand--Zworski \cite{SZ1}.
Eventually the proof that the viscosity eigenvalues converge to 
scattering resonances is a straightforward application of the
methods of \cite{SZ1} (see also \cite[\S 7.2]{SjR} for a more detailed
presentation and \cite[\S 4.5]{res} for an approach to complex scaling
based on the continuation of the Green function $ G ( z , x , y )$ in 
\eqref{eq:Green} in variables $ x $ and $ y $).

Suppose that $ \Omega \subset \CC^n $ is an open subset and that 
\begin{equation}
\label{eq:PzDz} 
 P ( z, D_z ) = \sum_{ |\alpha | \leq m } a_\alpha ( z ) D_z^\alpha, \ \ 
D_{z_j} := \textstyle{\frac 1 i } \partial_{z_j} , \ \ D_z^\alpha = 
D_{z_1}^{\alpha_1} \cdots D_{z_n}^{\alpha_n } , \end{equation}
is a differential operator with holomorphic coefficients. For instance
we can have $ P ( z , D_z ) = \sum_{ j=1}^n D_{z_j}^2 - i \epsilon z_j^2  $.

Suppose that $ \Omega \subset \CC^n $ is an open subset and that $ \Gamma \subset \Omega $ is a {\em maximal totally real}
submanifold. That means that $ \Gamma $ is
 a smooth real submanifold of dimension $ n $ such that
\begin{equation}
\label{eq:totre} \forall \, x \in \Gamma , \ \    T_x \Gamma \cap i T_x \Gamma = \{ 0 \} .\end{equation}
Here we identify $ T_x \Gamma $ with a real subspace of $ \CC^n $. The condition
\eqref{eq:totre} means that there exists a {\em complex linear} change of variables $ A : \CC^n 
\to \CC^n $ such that $ A ( T_x \Gamma ) = \RR^n \subset \CC^n $. Locally, 
$ \Gamma $ can be represented using real coordinates:
\begin{equation} 
\label{eq:fcoord}  \RR^n \supset U \ni x \mapsto f ( x ) = ( f_1 ( x ) , \cdots , 
f_n ( x ) )  \in \Gamma \subset \Omega \subset \CC^n .\end{equation}
Composing the matrix $  \partial_x f ( x ) := (\partial_{x_j} f_k ( x ) )_{ 1 \leq  k,j\leq n } $ with 
$ A $ we obtain an invertible matrix $ \RR^n \to \RR^n $. That means that
\begin{equation}
\label{eq:detf}
\det \left( \frac{ \partial f_k ( x ) } {\partial x_j } \right)_{ 1 \leq k , j \leq n }
\neq 0 . 
\end{equation}
Conversely, if \eqref{eq:detf} holds, then $ \partial f ( x ) $ is an 
injective complex linear matrix and for any sets $ U , V \subset \CC^n $, 
$ \partial f ( x ) ( U ) \cap
\partial f ( x ) ( V ) = \partial f ( x ) ( U \cap V ) $. 
Hence, 
\[  \begin{split} T_x \Gamma \cap i T_x \Gamma & = \partial f ( x )  ( \RR^n ) \cap 
i  \partial f ( x ) ( \RR^n ) = \partial f ( x ) ( \RR^n ) \cap 
\partial f ( x ) ( i \RR^n ) \\
& = \partial f ( x ) ( \RR^n \cap i \RR^n ) = \{ 0 \} ,\end{split} 
\]
and \eqref{eq:detf} implies \eqref{eq:totre}.
The volume form on $ \Gamma $ is obtained by pushing forward the standard volume
form on $ \RR^n $ by $ f $. That of course depends on the choice of $ f $ 
(in what follows the uniformity will be guaranteed by \eqref{fth2} below).

\medskip
\noindent
{\bf Example. } As a simple illustration consider $ n=2 $ and $ f ( x_1 , x_2 ) = ( x_1 + i x_2 , 0 ) \in \CC^2 $. Then
\[ \partial_x f ( x ) = \begin{bmatrix} 1 & i \\ 0 & 0 \end{bmatrix}, \ \ 
T_x f ( \RR^2 ) = \CC \oplus \{ 0 \} \subset \CC^2 . \]
The tangent space is not totally real and condition \eqref{eq:detf} is violated.
To introduce the next topic we also note we cannot restrict operators, $ P $, 
with holomorphic coefficients to $ f ( \RR^2 ) $ in a way that 
for holomorphic functions, $ u $,  $ (P u) |_{ f ( \RR^2) } = (P_{ f ( \RR^n ) } ) 
( u |_{f ( \RR^n ) } ). $ As an example consider $  P = \partial_{z_2} $ 
and $ u = z_2 $.

\medskip
The point of introducing totally real submanifolds $ \Gamma $ is the fact that
an operator, $ P $,  with holomorphic coefficients can be restricted to an 
operator with complex smooth coefficients on $ \Gamma $, $ P_\Gamma $, 
in such a way that for $ u $ holomorphic near $ \Gamma $, 
$ Pu |_\Gamma = P_\Gamma ( u |_\Gamma ) $.

The differential operator $ P ( z, D_z ) $ given in \eqref{eq:PzDz} defines a unique $ P_\Gamma $ a differential operator on $ \Gamma $ as follows.
Using \eqref{eq:fcoord} we can identify a small neighbourhood of any 
$ z_0 \in \Gamma $ with $ U \subset \RR^n $. Then $ u \in \CI ( \Gamma 
\cap f ( U ) ) $ can be identified with $ u \circ f \in \CI ( U ) $. We then have
\begin{equation}
\label{eq:Pgamma} ( P_\Gamma   u )  \circ f  (x) = 
\sum_{ |\alpha | \leq m } (a_\alpha \circ f ) ( x ) ( ({}^t \partial_x f ( x) ^{-1} 
D_{ x })^\alpha ( u \circ f )  ( x) . \end{equation}
It is easy to see that this definition is independent of the choice of  $f $
and that the condition \eqref{eq:detf} is crucial.

The key fact is the standard result about continuation of solutions to $ 
P_\Gamma u $. The proof based on \cite{L}, \cite{M} and \cite{S1} 
can be found in \cite[Lemma 3.1]{SZ1} and (in more detail) \cite[Lemma 7.2]{SjR}. 
With the notation above we have the following:

\begin{lem}
\label{l:ext}
Suppose that $ W \subset \RR^n $ is open and that 
$ F : [ 0 , 1 ] \times W \ni ( s, x ) \mapsto F (  s ,x ) \in \CC^n , $
is a smooth proper map satisfying for all $ s \in [ 0 , 1 ] $ 
\[  \det \partial_x F ( s , x ) \neq 0 , \ \  \text{and  $ x \mapsto F ( s , x )  $ is injective.} \]
In addition assume that there exists a compact set $ K \subset W $ such that 
\[  x \in W \setminus K  \  \Longrightarrow \ F ( 0 , x ) = F ( s ,x ) , \ \ 0 \leq s \leq 1 , \]
and that $ F ( [ 0 , 1 ] \times W ) \subset \Omega $ with $ P ( z , D_z ) $ 
a differential operator with holomorphic coefficients in $ \Omega $. 

Now assume that for $ \Gamma_s := F ( \{ s \} \times W ) $, 
$ P_{ \Gamma_s } $ is an elliptic differential operator in the sense that
\[    \big | \sum_{|\alpha| = m } a_\alpha ( z)  \zeta^\alpha \big | \geq
C | \zeta |^m , \ \ ( z , \zeta ) \in T^* \Gamma_s . \]

If $ u_0 \in \CI ( \Gamma_0 ) $ and $ P_{\Gamma_0} u_0 $ extends to a holomorphic
function on $ \Omega $, then for every $ s \in [ 0 , 1 ] $ there exists 
a holomorphic function, $ U_s $ defined near $ \Gamma_s $ such that, for some~$ \epsilon $, 
\[ U_0 |_{\Gamma_0 } = u_0 ,  \ \   |s - s' | < \epsilon  \ \Longrightarrow \ \text{ $ U_{s} = U_{s'} $ on the intersection of their domains.}
 \]
\end{lem}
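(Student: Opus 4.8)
### Proof Proposal

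The plan is to reduce the statement to a local analytic-continuation principle for elliptic equations and then propagate it along the homotopy parameter $ s $. The starting observation is that the hypotheses on $ F $ make each $ \Gamma_s $ a maximal totally real submanifold of $ \Omega $ (by the equivalence of \eqref{eq:detf} and \eqref{eq:totre} established above), so $ P_{\Gamma_s} $ is a well-defined elliptic differential operator with smooth complex coefficients, given in coordinates by \eqref{eq:Pgamma}. The ellipticity of $ P_{\Gamma_s} $ is precisely what lets us invoke the classical fact — originating in the work of Lax \cite{L}, Malgrange \cite{M} and Sj\"ostrand \cite{S1} — that a solution of $ P_{\Gamma} u = g|_\Gamma $ with $ g $ holomorphic near a point of $ \Gamma $ is itself the restriction of a function holomorphic in a full complex neighborhood of that point. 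In other words: ellipticity upgrades ``solves the restricted equation'' to ``extends holomorphically off $ \Gamma $.''

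First I would set up the local version. Fix $ s_0 \in [0,1] $ and $ z_0 = F(s_0, x_0) \in \Gamma_{s_0} $. Since $ \det \partial_x F(s_0, \cdot) \neq 0 $, a neighborhood of $ z_0 $ in $ \Gamma_{s_0} $ is parametrized by $ f = F(s_0, \cdot) $ on a small ball $ U \ni x_0 $, and $ P_{\Gamma_{s_0}}(u_{s_0}) = P(z,D_z) U $ restricted to $ \Gamma_{s_0} $ wherever we already have a local extension $ U $. Assuming inductively that on $ \Gamma_{s_0} $ we are given a smooth $ u_{s_0} $ with $ P_{\Gamma_{s_0}} u_{s_0} $ equal to the restriction of a function holomorphic on all of $ \Omega $, the cited local theorem produces, near each $ z_0 \in \Gamma_{s_0} $, a holomorphic $ U_{z_0} $ with $ U_{z_0}|_{\Gamma_{s_0}} = u_{s_0} $. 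Two such local extensions agree on the overlap of their domains because they agree on the totally real submanifold $ \Gamma_{s_0} $, which forces equality of holomorphic functions (uniqueness of analytic continuation from a maximal totally real set — this is where \eqref{eq:totre} is used a second time). Patching these gives a holomorphic function defined in a neighborhood of $ \Gamma_{s_0} $ in $ \CC^n $.

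Next I would run the homotopy. The compact-support condition ``$ F(0,x) = F(s,x) $ for $ x \in W \setminus K $'' means the family $ \Gamma_s $ is fixed outside the compact set $ F(\{0\} \times K) $, so all the action happens in a fixed compact region of $ \Omega $, and the neighborhoods in which the local extensions live can be taken of a size bounded below uniformly in $ s $ by a compactness argument (using smoothness of $ F $ and continuity of $ \det \partial_x F $ in $ s $). Starting from $ U_0 $ with $ U_0|_{\Gamma_0} = u_0 $, for $ s $ close to $ 0 $ the submanifold $ \Gamma_s $ lies inside the domain of $ U_0 $, so we may set $ U_s := U_0 $ there; then $ P_{\Gamma_s}(U_s|_{\Gamma_s}) = (P U_0)|_{\Gamma_s} $ is still the restriction of a holomorphic function on $ \Omega $ (namely $ P U_0 $ itself wherever defined, or more precisely the given holomorphic extension of $ P_{\Gamma_0} u_0 $, which does not change). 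A connectedness/continuation argument on the interval $ [0,1] $ — cover it by finitely many overlapping intervals on each of which the extension can be transported by the local construction — yields $ U_s $ for all $ s $, with the compatibility $ U_s = U_{s'} $ on the overlap of domains whenever $ |s - s'| $ is smaller than the uniform $ \epsilon $ coming from the compactness estimate.

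The main obstacle I expect is not the homotopy bookkeeping but the uniformity: one must ensure that the size of the complex neighborhood on which the local holomorphic extension exists does not shrink to zero as $ s $ varies or as $ z_0 $ ranges over $ \Gamma_s $. This is where the ellipticity constant $ C $ in the hypothesis, together with the properness of $ F $ and the fact that only a fixed compact part of $ \Gamma_s $ moves, must be combined to extract a uniform lower bound; granting this, the patching and the induction on $ s $ are routine, and the precise argument is exactly that of \cite[Lemma 3.1]{SZ1} and \cite[Lemma 7.2]{SjR}, to which I would ultimately refer for the analytic details.
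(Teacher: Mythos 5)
The paper does not give a proof of this lemma; it states it and refers to \cite[Lemma 3.1]{SZ1} and \cite[Lemma 7.2]{SjR}. Your sketch — local holomorphic extension via ellipticity and total reality (the Lax--Martinez--Sj\"ostrand extension principle from \cite{L}, \cite{M}, \cite{S1}), uniqueness of holomorphic extensions from a maximal totally real submanifold, and propagation along the homotopy with the uniform neighborhood-size lower bound extracted from compactness and the hypothesis that only a fixed compact part of $\Gamma_s$ moves — is a faithful outline of the argument in those references, and you correctly identify the uniformity of the extension domain as the genuine technical content, ultimately deferring to the cited sources just as the paper does.
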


In other words, the function $ u_0 $ defined on $ \Gamma_0 $ extends to a 
possibly multivalued function $ U $ in a neighbourhood of $ f ( [0,1] \times W ) $.

The lemma will be applied to a family of deformations of $ \RR^n $ in $ \CC^n $.
Our goal is to restrict the operator $ P_\epsilon = - \Delta - i \epsilon x^2 + V $, 
$ \epsilon \geq 0 $, to the corresponding totally real submanifolds. For that
the deformation has to avoid the support of $ V $ and we choose $ r_0 $
such that $ \supp V \subset B ( 0 , r_0 )$. We then construct 
\begin{equation}
\label{eq:defgt}  [ 0 , \pi ) \times [ 0 , \infty ) \ni ( \theta, t ) \longmapsto g_\theta ( t ) \in \CC \end{equation}
which is $ \CI $, is injective on $ [ 0 , \infty ) $ for every fixed $\theta$
and satisfies
\begin{gather}
\text{$ g_\theta ( t ) = t  $ for $ 0 \leq t \leq r_0 $,}
\label{fth1}\\
0 \leq \arg g_\theta ( t ) \leq \theta , \ \ \partial_t g_\theta \neq 0 , 
\label{fth2}\\
\arg g_\theta ( t ) \leq \arg \partial_t g_\theta ( t ) \leq \arg g_\theta ( t) + 
\epsilon_0 , 
\label{fth21}
\\
\text{ $g_\theta ( t ) = e^{i\theta } t $ for $ t \geq T_0 $ where $ T_0 $
depends only on $ \epsilon_0 $ and $ r_0 $.} 
\label{fth3}
\end{gather}
We now define the totally real submanifolds, $ \Gamma_\theta $, as images of
$ \RR^n $ under the maps
\begin{equation}
\label{eq:defGth}   f_\theta : \RR^n \to \CC^n , \ \  f_\theta ( x ) := g_\theta ( |x|) { x}/{|x|}
, \ \ \Gamma_\theta := f_\theta ( \RR^n ) . \end{equation}
For $ \epsilon \geq 0 $ and $ 0 \leq \theta < \pi $ we put
\begin{gather} \label{eq:defP}  \begin{gathered} 
- \Delta_\theta := (\Delta_z) |_{\Gamma_\theta } , \ \ \ \ x_\theta := z|_{\Gamma_\theta}, \\ Q_{\epsilon, \theta } : =
 - \Delta_\theta - i \epsilon x_\theta^2 , \ \ \ \ 
 P_{\epsilon, \theta } :=  Q_{\epsilon , \theta } + V . 
\end{gathered}
\end{gather}
Parametrizing $ \Gamma_\theta $ by $ ( t, \omega ) \in [0, \infty ) \times \mathbb
S^{n-1} $, $ ( t, \omega) \mapsto g_\theta ( t ) \omega $, we have
\begin{equation}
\label{eq:Delth}  -\Delta_\theta = \left( g_\theta ' ( t )^{-1} D_t \right)^2 - 
i (n-1) g_\theta(t)^{-1} g'_\theta(t)^{-1} D_t + g_\theta(t)^{-2} D_\omega^2 , 
\end{equation}
where $ D_t = \partial_t/i $ and $ D_\omega^2 = - \Delta_{ \mathbb S^{n-1} } $.
The symbol is given by 
\[ \sigma ( - \Delta_\theta ) =  g'_\theta(t)^{-1} \tau^2 + g_\theta ( t )^{-2} 
w^2 , \ \ \  ( t , \omega ; \tau , w )\in T^* ( [ 0 , \infty ) \times \mathbb S^{n-1}) . \]

The basic result based on ellipticity at infinity is 
\[ -2 \theta + \delta < \arg z <  2 \pi - 2 \theta - \delta , \ \ |z| \geq \delta \ \
\ \Longrightarrow  \  \ ( - \Delta_\theta - z )^{-1} = \mathcal O_\epsilon ( 1 ) : 
L^2 ( \Gamma_\theta ) \to H^2 ( \Gamma_\theta ) . \]
This follows from \cite[Lemmas 3.2--3.5]{SZ1} applied with $ P = - \Delta $. 
As will be reviewed in \S \ref{mc} this shows that $  P_{0,\theta} - z : H^2 
\to L^2 $ is a Fredholm operator in this range of values of $ z $ and that
the eigenvalues are independent of $ \theta $. 

The crucial property is 
\begin{lem}
\label{l:2}
Let $ R_0 ( z ) = ( - \Delta -z )^{-1} : L^2  \to H^2 $, 
$ \Im z > 0 $,
be the free resolvent and let $ R_0 ( z ) $ also denote its analytic continuation
across $ [ 0 , \infty ) $ as an operator $ L^2_{\rm{comp} } \to H^2_{\rm{loc}} $.

Suppose that $ \chi \in \CIc ( B ( 0 , r_0 ) ) $ so that $ \chi $ is 
defined on $ \Gamma_\theta $. Then for $ - 2 \theta < \arg z < 2 \pi - 
2 \theta $, $ \theta < \pi $, 
\begin{equation}
\label{eq:l2}  \chi R_0 ( z ) \chi = \chi ( - \Delta_\theta - z )^{-1} \chi. 
\end{equation}
\end{lem}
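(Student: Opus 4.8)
The plan is to prove the identity first in the physical region $ \Im z > 0 $, where $ R_0(z) : L^2 \to L^2 $ is honest and everything in sight converges, and then propagate it to the continued region $ -2\theta < \arg z < 2\pi - 2\theta $ by a combination of analytic continuation and a deformation argument. More precisely, for $ \Im z > 0 $ let $ u_0 \in \CIc ( \RR^n ) $ (or more generally $ u_0 \in L^2_{\rm{comp}} $) and set $ v = R_0 ( z ) u_0 $, so that $ ( - \Delta - z ) v = u_0 $ on $ \RR^n $. Away from $ \supp u_0 \subset B ( 0 , r_0 ) $, $ v $ solves the elliptic equation $ ( - \Delta - z ) v = 0 $, and by the explicit form of the free Green's function (or by classical analytic hypoellipticity for $ -\Delta - z $ with $ z \notin [0,\infty) $) the function $ v $ extends holomorphically to a conic neighbourhood of $ \RR^n \setminus B ( 0 , r_0 ) $ in $ \CC^n $; near infinity this is exactly the region swept out by the deformations $ f_\theta $. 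This is the point where Lemma~\ref{l:ext} enters: applying it to the family $ F ( s , x ) = f_{ s \theta } ( x ) $ (which fixes $ B(0,r_0) $ by \eqref{fth1}, is injective with nonvanishing Jacobian by \eqref{fth2}, and along which $ (-\Delta)_{\Gamma_{s\theta}} $ stays elliptic by the symbol computation following \eqref{eq:Delth}, using \eqref{fth21}), we conclude that $ v $ restricted to $ \Gamma_\theta $, call it $ v_\theta $, is well defined and solves $ ( - \Delta_\theta - z ) v_\theta = u_0 $ on $ \Gamma_\theta $ (recall $ u_0 $ is supported where $ \Gamma_\theta $ coincides with $ \RR^n $).

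Next I would check that $ v_\theta \in H^2 ( \Gamma_\theta ) $. This is where the ellipticity-at-infinity resolvent bound quoted just before the lemma is used: for $ -2\theta + \delta < \arg z < 2\pi - 2\theta - \delta $ the operator $ ( - \Delta_\theta - z )^{-1} : L^2 ( \Gamma_\theta ) \to H^2 ( \Gamma_\theta ) $ exists and is bounded, so it suffices to know $ v_\theta \in L^2 ( \Gamma_\theta ) $ a priori; but near infinity $ \Gamma_\theta = e^{i\theta} \RR^n $ and the holomorphic continuation of the free Green's function $ G_0 ( z , x , y ) \sim c_n e^{ i \sqrt z | x - y | } | x - y |^{ -(n-1)/2 } $ (branch with $ \Im \sqrt z > 0 $) decays exponentially along the ray $ x = e^{i\theta} r \omega $ precisely when $ \Re ( i e^{i\theta} \sqrt z ) < 0 $, i.e. $ \arg z > - 2\theta $, which is exactly our hypothesis. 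So $ v_\theta $ is square-integrable on $ \Gamma_\theta $, hence in $ H^2 $, and by uniqueness of the bounded inverse $ v_\theta = ( - \Delta_\theta - z )^{-1} u_0 $. Multiplying by $ \chi \in \CIc ( B ( 0 , r_0 ) ) $ on both sides and recalling $ \chi $ is supported where $ \Gamma_\theta $ and $ \RR^n $ agree, this yields $ \chi R_0 ( z ) \chi u_0 = \chi ( - \Delta_\theta - z )^{-1} \chi u_0 $ for all such $ u_0 $, hence \eqref{eq:l2}, at first on $ -2\theta + \delta < \arg z < 2\pi - 2\theta - \delta $ and then on the full open sector by letting $ \delta \to 0 $ (both sides are holomorphic in $ z $ there with values in bounded operators $ L^2_{\rm{comp}} \to H^2_{\rm{loc}} $, the left side by the meromorphic continuation recalled in the introduction and the right side by the resolvent bound, and they agree on an open subset).

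The main obstacle is the a priori $ L^2 ( \Gamma_\theta ) $ membership of the continued solution $ v_\theta $: the extension lemma only produces a holomorphic germ near $ \Gamma_\theta $, with no control at infinity, so one must separately extract the correct decay. The clean way to do this is to avoid pointwise Green's-function asymptotics and instead argue on the physical side first: for $ \Im z $ large the Born series $ R_V ( z ) = R_0 ( z ) ( I + V R_0 ( z ) )^{-1} $ and the corresponding identity for $ R_0 $ itself make the deformation transparent, and the general case follows by analytic continuation in $ z $ within the upper half plane before one ever deforms the contour. Alternatively, one can invoke the fact that both $ R_0 ( z ) $ and $ ( - \Delta_\theta - z )^{-1} $ are constructed in \cite[Lemmas 3.2--3.5]{SZ1} with compatible cutoffs, so \eqref{eq:l2} is really a bookkeeping identity about two realizations of the same continued Green function; I would phrase the write-up so that the substantive input is isolated to the elliptic continuation statement of Lemma~\ref{l:ext} plus the resolvent estimate, with the matching of boundary data being the only thing checked by hand.
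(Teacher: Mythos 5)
Your plan is organized correctly at the top level: start with $\Im z > 0$, apply Lemma~\ref{l:ext} to push the holomorphic extension of $v = R_0(z)u_0$ along the deformation, match the data on $B(0,r_0)$ where $\Gamma_\theta = \RR^n$, and identify $v_\theta$ with $(-\Delta_\theta - z)^{-1}u_0$ once you know $v_\theta \in L^2(\Gamma_\theta)$. You also correctly name the crux of the matter: Lemma~\ref{l:ext} only produces a holomorphic germ near $\Gamma_\theta$, with no control at infinity. But having named the obstacle, you do not overcome it. The pointwise Green's-function asymptotics $G_0 \sim c_n e^{i\sqrt z |x-y|}|x-y|^{-(n-1)/2}$ with the condition $\Re(ie^{i\theta}\sqrt z)<0$ is the germ of a valid argument, but it requires keeping track of the branch of $\sqrt{(x-y)\cdot(x-y)}$ for complex $x$, and you explicitly decline to carry it out (``the clean way to do this is to avoid pointwise Green's-function asymptotics''). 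The two substitutes you offer do not work: the Born series $R_V(z) = R_0(z)(I + VR_0(z))^{-1}$ is simply not relevant here (the lemma is about the free resolvent, and the Born series gives no information about decay of the continued free solution along $\Gamma_\theta$), and the remark that both sides ``are constructed in [SZ1] with compatible cutoffs'' defers to the reference for precisely the step at issue.

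The paper closes this gap by a quantitative device your proposal lacks. Rather than deforming $\RR^n$ to $\Gamma_\theta$ in one shot via $F(s,x) = f_{s\theta}(x)$, it reduces to two nearby angles $\theta_1<\theta_2$ with $|\theta_2-\theta_1|\ll 1$ and introduces the compactly modified interpolating contours $\Gamma_{\theta_1,\theta_2,T}$ which equal $\Gamma_{\theta_1}$ outside the annulus $\{T\le|z|\le 6T\}$ and $\Gamma_{\theta_2}$ on $\{3T\le|z|\le 4T\}$. Applying Lemma~\ref{l:ext} to each of these, one gets holomorphic extensions $U^T$, and the $L^2$ control at infinity then comes from the uniform (in $T$) semiclassical elliptic estimate \eqref{eq:utt} on the annuli, with $h = 1/T$; summing over $T=2^j$ gives $\|U|_{\Gamma_{\theta_2}}\|_{L^2(\Gamma_{\theta_2})}\lesssim\|u_1\|_{L^2(\Gamma_{\theta_1})}$, so the restriction of the continuation to $\Gamma_{\theta_2}$ is in $L^2$. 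The smallness of $|\theta_2-\theta_1|$ is what makes \eqref{eq:utt} a perturbation of a flat semiclassical estimate, so your one-parameter family from $\theta=0$ straight to $\theta$ would not deliver it directly. This is the substantive content of the proof; it is also what makes the argument portable to Lemma~\ref{l:5}, where a direct Green's-function computation for $Q_{\epsilon,\theta}$ would be much harder.
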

\begin{proof} We recall the main features of the proof which is
implicit in \cite[\S 3]{SZ1}. It is sufficient to establish the identity
\eqref{eq:l2} for $ 0 < \arg z < 2 \pi - 2 \theta $ 
as it then follows by analytic continuation. It is also enough to show that in 
this 
range of $  z$ and $ 0 \leq \theta_1 < \theta_2 \leq \theta$, $ |\theta_1 -
\theta_2 | \ll 1 $, 
\begin{equation}
\label{eq:l21}  \chi ( - \Delta_{\theta_1} - z)^{-1} \chi = \chi ( - \Delta_{\theta_2}  - z )^{-1} \chi. 
\end{equation} 
For that we show that for $ f \in L^2 ( B ( 0 , r_0 ) ) 
\subset L^2 ( \Gamma_{\theta_j} ) $  there exists $ U $ holomorphic in a 
neighbourhood $ \Omega_{\theta_1, \theta_2 } $ of 
$$ \bigcup_{ \theta_1 \leq \theta \leq \theta_2 } (  \Gamma_\theta 
\setminus B ( 0 , r_0 ) ) 
\subset
\CC^n $$
such that 
\begin{equation}
\label{eq:UGa}  U|_{\Gamma_{\theta_j } } (x )  = [ ( -\Delta_{\theta_j}  - z)^{-1} \chi f ] ( x ) 
\ \text{ for $ x \in \Gamma_{\theta_j } \setminus  B( 0 , r_0 ) $.} \end{equation}
The unique continuation property for second order elliptic 
operators then shows that 
\[  \chi ( \Delta_{\theta_1 }- z )^{-1} \chi f = 
\chi ( \Delta_{\theta_2} - z )^{-1} \chi f , \]
proving \eqref{eq:l2}.

\begin{figure}
\includegraphics{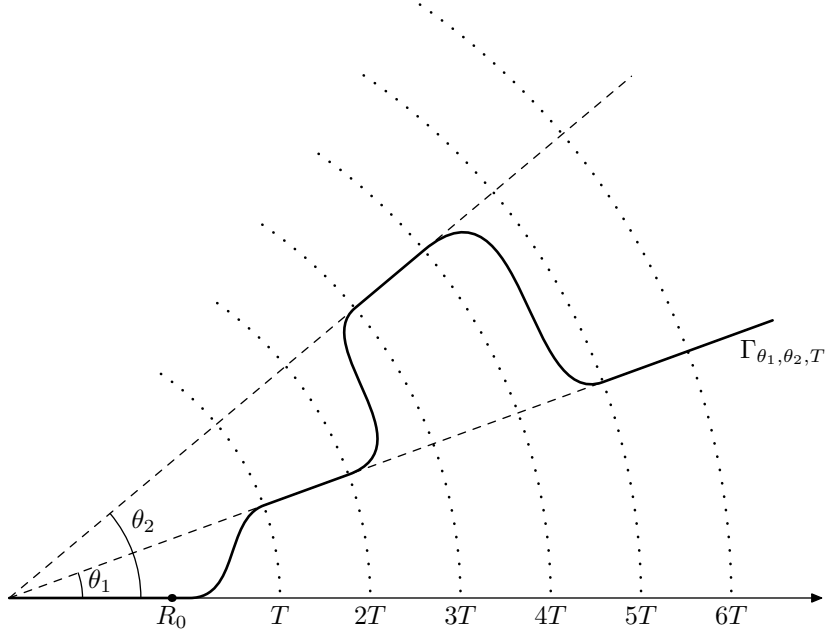}
\caption{The deformed totally real submanifold $ \Gamma_{\theta_1, \theta_2, T } $ interpolating between $ \Gamma_{\theta_1} $ and $ \Gamma_{\theta_2}$. }
\label{f:1}
\end{figure}

To show the existence of $ U $ such that \eqref{eq:UGa}  holds we
use Lemma \ref{l:ext} applied to a modified family of deformations.
The key is to show that a holomorphic extension, $ U $, of the solution to 
$ ( - \Delta_{\theta_1 } - z ) u_1  = \chi f $, $ u_1 \in L^2 ( \Gamma_{\theta_1} ) $, 
restricts to $ u_2 \in L^2 ( \Gamma_{\theta_2} )$ (the equation 
$ ( - \Delta_{\theta_2 } - z ) u_2 = \chi f $ is automatically satisfied). 
That means that $ u_2 = ( - \Delta_{\theta_2 } - z)^{-1} ( \chi f ) $ 
proving \eqref{eq:UGa}. 

The modified family of contours is obtained as follows. Fix $ T \gg 1 $ 
and choose $ \chi \in \CIc ( ( 2 , 5 )  ; [ 0 , 1 ])$
equal to $ 1 $ near $ [ 3, 4]  $. Then define
\begin{gather*}  g_{\theta_1, \theta_2 , T } ( t ) := 
g_{\theta_1 } ( t ) + \chi ( t/T) ( g_{\theta_2 } ( t ) - g_{\theta_1 } ( t ) ) , \\ 
\Gamma_{\theta_1 , \theta_2, T } := \{ g_{\theta_1, \theta_2 , T } ( t ) \omega \, :\,
t \in [ 0 , \infty ) , \ \omega \in \mathbb S^{n-1} \} \subset \CC^n  .\end{gather*}
We can apply Lemma \ref{l:ext} to the family of totally real submanifolds 
interpolating between $ \Gamma_{ \theta_1 } $ and $ \Gamma_{\theta_1, \theta_2, T} $:
$ [ 0 , 1 ] \ni s \longmapsto \Gamma_{ \theta_1 , \theta_1 + s ( \theta_1 , \theta_2), T } $.
That implies that there exists a holomorphic function $ U^T $ defined in 
a neighbourhood of the union of these submanifolds and such that $ u_1 =  
U^T |_{ \Gamma_{\theta_1 } } $. Changing $ T $ we obtain a family of functions
agreeing on the intersections of their domains and that gives  $ U $ defined in 
the neighbourhood $ \Omega_{ \theta_1, \theta_2 } $. To see that 
$ U |_{\Gamma_{\theta_2}} \in L^2 ( \Gamma_{\theta_2} ) $ it suffices to show that
\begin{equation} \label{eq:UTG} \| U^T |_{ \Gamma_{ \theta_1, \theta_2 , T } } \|_{ L^2 ( \Gamma_{\theta_1, 
\theta_2 , T} ) } \leq  C_0 \| u_1  \|_{L^2 ( \Gamma_{\theta_1} \cap \{  T \leq |z | \leq 6 T \} , ) }  , \end{equation}
where $ C_0 $ is independent of $ T $. (We apply \eqref{eq:UTG} 
with $ T = 2^j $ and sum over $ j $.)

To see \eqref{eq:UTG}
\begin{gather*} \Omega_1 ( T )  = \{ z \in \CC^n :  2 T \leq | z | \leq 5 T \} \cap
\Gamma_{ \theta_1, \theta_2, T } 
\supset \Gamma_{\theta_1, \theta_2, T } 
\setminus \Gamma_{\theta_1 }  , \\ 
\Omega_2 ( T )  = \{ z \in \CC^n :   T \leq | z | \leq 6 T \} \cap
\Gamma_{ \theta_1, \theta_2, T } , \ \ 
\Omega_2 ( T ) \setminus \Omega_1 ( T ) \subset e^{ i \theta_1 } \RR^n . \end{gather*}
We claim that for $ T $ large and $ u \in \CI ( \Gamma_{\theta_1, \theta_2, T } )$,
\begin{equation}
\label{eq:utt}   \| u \|_{ L^2 (\Omega_1 ( T ) ) } \leq C \| ( - \Delta_{ \Gamma_{\theta_1, 
\theta_2 , T } } - z ) u \|_{ L^2 ( \Omega_2 ( T ) )}  + 
C \| u \|_{ L^2 ( \Omega_2 ( T ) \setminus \Omega_1 ( T ) )} . \end{equation}
For $ | \theta_2 - \theta_1 | \ll 1 $, this estimate is a perturbation 
of a standard semiclassical elliptic estimate:
treating $ h:= 1/T $ as a semiclassical parameter, uniform ellipticity of 
$ - e^{ -2 i \theta } h^2 \Delta - z $ shows that for $ v \in \CI ( \RR^n ) $, 
\[   \| v \|_{ L^2 ( \{2 \leq |x| \leq 5 \})} \leq
C \| ( - e^{-2i \theta}  h^2 \Delta - z ) v \|_{ L^2 ( \{1 \leq |x| \leq 6 \}) } 
+ C \| v \|_{ L^2 ( \{ 1 \leq |x| \leq 2 \} \cup \{ 5 \leq |x| \leq 6 \} } .\]
(This can be seen applying the inverse from \cite[Theorem 4.29]{e-z} to $  \chi v $
where $ \chi \in \CIc ( ( 1 , 6 )) $ is equal to $ 1 $ on $ [ 2, 5 ] $.)
The properties of $ \Omega_j ( T ) $ then imply \eqref{eq:UTG} completing the
argument.
\end{proof}

\section{The Davies harmonic oscillator}
\label{dho}

The operator $ H_{\epsilon, \gamma} := - \Delta + e^{ - i \gamma } \epsilon x^2 $,  
$ \epsilon > 0 $, $0 \leq \gamma < \pi $, 
was used by Davies \cite{Dav} to illustrate properties of 
non-normal differential operators. We recall the following basic result:

\begin{lem}
\label{l:3} 
The operator $ H_{\epsilon , \gamma } $ is an unbounded operator 
on $ L^2 $ with the discrete spectrum given by 
\begin{equation}
\label{eq:l30} \sigma ( H_{\epsilon , \gamma } ) = e^{ - i\gamma/2} \sqrt \epsilon ( n + 2|\NN_0^n| ), \ \ | \alpha | = \alpha_1 + \cdots + \alpha_n . \end{equation}
If $ \Omega \Subset \{ z :  -\gamma < \arg z < 0 \} \setminus 
e^{- i \gamma/2  } [ 0 , \infty )  $, 
then for some constant $ C_1 =  C_1 ( \Omega) $, 
\begin{equation}
\label{eq:l31}   \frac{1}{ C_1} e^{ \epsilon^{-\frac12} /C_1 } \leq
\| ( H_{ \epsilon , \gamma } - z )^{-1} \|_{ L^2 \to L^2 } \leq 
{ C_1 }  e^{ C_1 \epsilon^{-\frac12} } \ \ , z \in \Omega.   \end{equation}

In addition for any $ \delta > 0 $ there exists a constant $ C_2 $ such that,
uniformly in $ \epsilon > 0 $, 
\begin{equation}
\label{eq:l32}
\|  ( H_{ \epsilon , \gamma } - z )^{-1} \|_{ L^2 \to L^2 }  \leq C_2/ |z| , \ \ 
 \delta < \arg z < 2 \pi - \gamma - \delta, \ \ |z| > \delta . 
 \end{equation}
 \end{lem}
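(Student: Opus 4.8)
The plan is to reduce \eqref{eq:l30}--\eqref{eq:l32} to the model operator $ H_{1,\gamma } = -\Delta + e^{-i\gamma } x^2 $ by a unitary rescaling, to pin down the spectrum via (complex) Hermite functions together with a completeness argument, and then to read off the two resolvent estimates from a semiclassical pseudospectral bound and from a numerical range bound, respectively. The rescaling is immediate: the map $ u ( x ) \mapsto \epsilon^{n/8 } u ( \epsilon^{\frac14 } x ) $ is unitary on $ L^2 ( \RR^n ) $ and conjugates $ H_{\epsilon,\gamma } $ to $ \sqrt\epsilon \, H_{1,\gamma } $, while $ u ( x ) \mapsto \epsilon^{n/4 } u ( \epsilon^{\frac12 } x ) $ conjugates it to the semiclassical operator $ P_h := - h^2 \Delta + e^{-i\gamma } x^2 $ with $ h = \sqrt\epsilon $. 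Hence it suffices to analyze $ H_{1,\gamma } $ for \eqref{eq:l30} and $ P_h $ as $ h \to 0 + $ for \eqref{eq:l31}.

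\smallsection{Spectrum} The quadratic form $ q ( u ) = \| \nabla u \|^2 + e^{-i\gamma } \| x u \|^2 $, with form domain the harmonic oscillator form domain $ \mathcal Q $, is closed and sectorial with values in the closed convex sector $ \overline{\Sigma} := \{ a + e^{-i\gamma } b : a , b \geq 0 \} $ (since $ 0 \leq \gamma < \pi $); the representation theorem for sectorial forms produces the m-sectorial operator $ H_{1,\gamma } $, shows that $ \CC \setminus \overline{\Sigma} $ lies in its resolvent set, and that $ ( H_{1,\gamma } + 1 )^{-1} $ maps $ L^2 $ boundedly into $ \mathcal Q $, hence is compact because $ \mathcal Q \hookrightarrow \hookrightarrow L^2 $. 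Thus $ H_{1,\gamma } $ is a well-defined unbounded operator on $ L^2 $ with discrete spectrum. Separating variables and using, in each variable, the functions $ H_k ( e^{ -i\gamma/4 } x ) e^{ - e^{ -i\gamma/2 } x^2 / 2 } $ — which are genuine $ L^2 $ eigenfunctions precisely because $ \Re e^{ -i\gamma/2 } = \cos ( \gamma / 2 ) > 0 $ — exhibits the eigenvalues $ e^{ -i\gamma/2 } ( n + 2 | \alpha | ) $, $ \alpha \in \NN_0^n $. That these are all of them follows because $ H_{1,\gamma }^* = - \Delta + e^{ i\gamma } x^2 $ has the conjugate system of eigenfunctions, whose span is dense in $ L^2 $ (it contains $ x^\alpha e^{ - e^{ i\gamma/2 } x^2 / 2 } $ for all $ \alpha $, a complete family by a Paley--Wiener argument); an eigenfunction of $ H_{1,\gamma } $ with eigenvalue off the list would be orthogonal to every member of that dense system, hence zero. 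Multiplying by $ \sqrt\epsilon $ gives \eqref{eq:l30}; this is essentially the computation of Davies \cite{Dav}.

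\smallsection{Pseudospectral bounds} By the rescaling, \eqref{eq:l31} is a statement about $ ( P_h - z )^{-1} $ with $ z \in \Omega $ fixed and $ h = \sqrt\epsilon \to 0 + $. The semiclassical symbol $ p ( x , \xi ) = \xi^2 + e^{-i\gamma } x^2 $ has range exactly $ \overline{\Sigma} $, and $ \Omega $ lies in the \emph{interior} of $ \overline{\Sigma} $, away from the bisector $ e^{ -i\gamma/2 } [ 0 , \infty ) $ on which the eigenvalues accumulate. For $ z $ in that interior one checks that there is a real point $ \rho = ( x , \xi ) $ with $ x \neq 0 $, $ \xi \neq 0 $, $ p ( \rho ) = z $, and
\[ \{ \Re p , \Im p \} ( \rho ) = - 4 \sin \gamma \, x \cdot \xi \neq 0 \qquad ( \gamma > 0 ) , \]
so the analytic pseudospectral mechanism applies: a WKB quasimode localized at $ \rho $ (as in \cite{Dav}) yields the lower bound $ \| ( P_h - z )^{-1} \| \geq c^{-1} e^{ c / h } $, while the matching upper bound $ \| ( P_h - z )^{-1} \| \leq C e^{ C / h } $ uses that the model is quadratic, hence exactly solvable — one reduces to dimension one, represents the resolvent kernel by parabolic cylinder functions, and estimates it (equivalently, one invokes the corresponding bound in \cite{Dav}). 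When $ \gamma = 0 $ the operator is self-adjoint and $ \Omega $ sits at a fixed positive distance from $ \sigma ( H_{1,0 } ) $, so both bounds are trivial. Substituting $ h = \sqrt\epsilon $ and absorbing powers of $ 1/h $ into the exponentials gives \eqref{eq:l31}. I expect this last upper bound to be the main obstacle: controlling the resolvent of a non-self-adjoint operator from \emph{above} throughout a region of its pseudospectrum is delicate in general, and here one genuinely has to exploit the exact solvability of the Davies oscillator (or the analyticity-based machinery behind \cite{Dav}).

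\smallsection{Uniform resolvent bound} For every $ \epsilon > 0 $ one has $ \langle H_{\epsilon,\gamma } u , u \rangle = \| \nabla u \|^2 + e^{ -i\gamma } \epsilon \| x u \|^2 \in \overline{\Sigma} $, so the numerical range of $ H_{\epsilon,\gamma } $ lies in the fixed closed convex sector $ \overline{\Sigma} $; its complement is connected, meets the resolvent set (e.g. at $ -1 $), and is therefore contained in the resolvent set, with $ \| ( H_{\epsilon,\gamma } - z )^{-1} \| \leq 1 / \operatorname{dist} ( z , \overline{\Sigma} ) $. For $ z $ with $ \delta < \arg z < 2 \pi - \gamma - \delta $ and $ | z | > \delta $, elementary trigonometry gives $ \operatorname{dist} ( z , \overline{\Sigma} ) \geq \sin ( \delta ) | z | $, which yields \eqref{eq:l32} with $ C_2 = 1 / \sin \delta $, uniformly in $ \epsilon > 0 $.
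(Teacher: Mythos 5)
Your argument reaches the correct conclusions and uses the same basic strategy as the paper (rescale to the semiclassical operator $-h^2\Delta + e^{-i\gamma}y^2$, $h=\sqrt\epsilon$, and analyze the Davies oscillator), but it fills in more of the detail explicitly rather than by citation. The paper's proof, after the rescaling, simply invokes Hitrik--Sj\"ostrand--Viola \cite{HSV} for \eqref{eq:l30} and for the upper bound in \eqref{eq:l31}, Dencker--Sj\"ostrand--Zworski \cite{DSZ} for the lower bound in \eqref{eq:l31}, and semiclassical ellipticity for \eqref{eq:l32}. Your computation of \eqref{eq:l30} via the complex Hermite functions $H_k(e^{-i\gamma/4}x)e^{-e^{-i\gamma/2}x^2/2}$ together with Paley--Wiener completeness of the adjoint system is Davies's original derivation and is a perfectly good elementary substitute for the general quadratic theory; your m-sectoriality/numerical-range proof of \eqref{eq:l32} is the ellipticity argument in thin disguise, and is arguably the cleanest way to see that the bound is $\epsilon$-independent; and the bracket computation $\{\Re p,\Im p\}=-4\sin\gamma\,x\cdot\xi\neq 0$ is precisely the nondegeneracy condition behind \cite{DSZ}, so citing the quasimode construction of \cite{Dav} for the lower bound in \eqref{eq:l31} is fine. (One small slip: when $\gamma=0$ the admissible set $\Omega$ in \eqref{eq:l31} is empty, so that case is vacuous rather than ``trivial''.)

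The one real soft spot --- which you flag yourself --- is the upper bound in \eqref{eq:l31} for $n\geq 2$. The phrase ``one reduces to dimension one'' conceals a genuine step: $-h^2\Delta + e^{-i\gamma}x^2=\sum_{j=1}^n\bigl(-h^2\partial_{x_j}^2+e^{-i\gamma}x_j^2\bigr)$ is a sum of commuting one-dimensional Davies operators, but its resolvent at a point $z$ is not a product of one-dimensional resolvents. To transfer the one-dimensional bound $Ce^{C/h}$ to $n$ dimensions one needs a multiple contour-integral representation of the resolvent with carefully chosen contours, or a Mehler-kernel argument, or control of the condition numbers in the (non-orthogonal) eigenfunction expansion. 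This is exactly what \cite{HSV} supplies for general elliptic quadratic operators in arbitrary dimension, which is why the paper cites it; if you want to keep a Davies-based, self-contained proof, you should spell out that $n$-dimensional reduction.
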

\begin{proof}
By rescaling $ y = \sqrt \epsilon x $
this operator in unitarily equivalent to $ - \epsilon \Delta_y + e^{-i \gamma}   y^2 $, 
that is a semiclassical, $ h = \sqrt \epsilon $, quadratic operator. 
For the analysis of the spectrum and upper bounds on the resolvent 
for general quadratic operators see Hitrik--Sj\"ostrand--Viola \cite{HSV} 
and references given there
-- in particular we obtain \eqref{eq:l30} and the upper bound in \eqref{eq:l31}. The lower bound in \eqref{eq:l31} 
follows from general arguments for operators with analytic 
coefficients -- see \cite[\S 3]{DSZ} and the bound \eqref{eq:l32} from 
(semiclassical) ellipticity of $ - h^2 \Delta_y + e^{ - i \gamma } y^2 - z $
for $ \delta < \arg z < 2 \pi - \gamma - \delta $, $ |z| > \delta $.
\end{proof}

\begin{figure}
\includegraphics{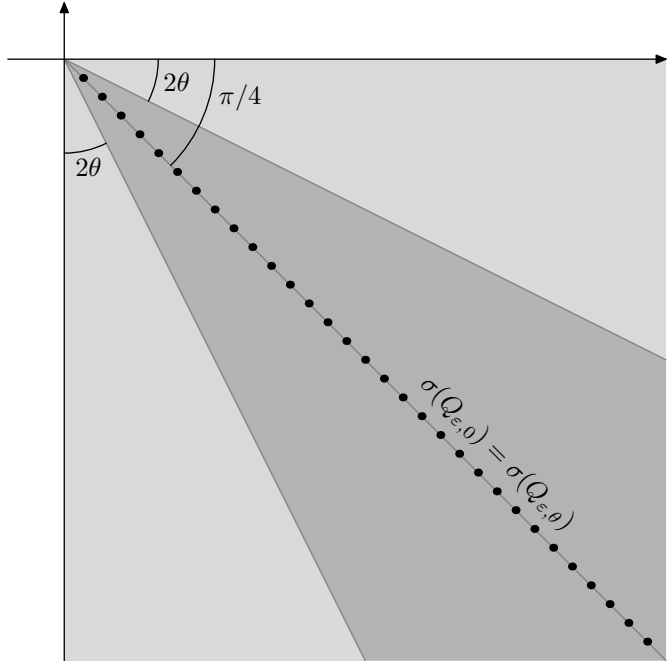}
\caption{A visualization of 
the spectrum of $ Q_{\epsilon , 0 } = - \Delta - i \epsilon x^2 $ 
which is equal to the spectrum of the deformed operator $ Q_{\epsilon , 
\theta } $. The lightly shaded region is the numerical range 
of $ Q_{\epsilon, 0 } $ and the darker shaded region, the numerical 
range of $ - e^{ -2 i \theta} \Delta - i e^{ 2 i \theta} \epsilon  x^2 $. 
The estimates for the resolvents of $ Q_{\epsilon, \theta} $ improve
outside of that region.} 
\label{f:2}
\end{figure}

We now consider the special case of $ H_{ \epsilon , \pi/2 } = 
Q_{\epsilon , 0 } $  and of its deformation 
$ Q_{\epsilon , \theta } $ -- see \eqref{eq:defP}.
The facts we need are given in the next two lemmas.
The first is the analogue of Lemma \ref{l:2}:
\begin{lem}
\label{l:5}
In the notation of Lemma \ref{l:2}, $ 0 \leq \theta \leq \pi/8 $, 
$ \epsilon > 0 $,  and
$  -2 \theta < \arg z < 3 \pi /2 + 2 \theta $ we have 
\begin{equation}
\label{eq:l5}
\chi ( Q_{ \epsilon , 0 } -z )^{-1} \chi = 
\chi ( Q_{\epsilon,  \theta } - z )^{-1} \chi . 
\end{equation}
In particular, for $ 0 \leq \theta \leq \pi/8 $, the spectrum is
independent of $ \theta $ and given by $ \sqrt \epsilon e^{ - i \pi/4}
( n + 2|\mathbb N_0^n|) $.
\end{lem}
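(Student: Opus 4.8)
The plan is to repeat the proof of Lemma~\ref{l:2} almost verbatim, with the Laplacian $ -\Delta $ replaced by the holomorphic-coefficient operator $ Q_\epsilon ( z , D_z ) = \sum_{j=1}^n ( D_{z_j}^2 - i \epsilon z_j^2 ) $ and $ -\Delta_\theta $ by its restriction $ Q_{\epsilon, \theta } $ to $ \Gamma_\theta $ from \eqref{eq:defP}. The one genuinely new ingredient is the symbol at infinity: by \eqref{eq:Delth} and $ x_\theta^2 = g_\theta ( t )^2 $ on $ \Gamma_\theta $, on the region $ t \geq T_0 $ (where $ g_\theta ( t ) = e^{ i\theta } t $) the full symbol of $ Q_{\epsilon, \theta } $ is, modulo lower order terms, $ e^{ - 2 i\theta } ( \tau^2 + t^{-2} w^2 ) + \epsilon\, e^{ i ( 2 \theta - \pi/2 ) } t^2 $. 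As $ ( \tau , w , t ) $ range to infinity these values fill the closed cone spanned by the rays $ e^{ - 2 i\theta } [ 0 , \infty ) $ and $ e^{ i ( 2 \theta - \pi/2 ) } [ 0 , \infty ) $, of angular width $ \pi/2 - 4 \theta $ --- a genuine cone (a ray when $ \theta = \pi/8 $) exactly when $ 0 \leq \theta \leq \pi/8 $ --- whose complement is precisely $ \{ - 2\theta < \arg z < 3\pi/2 + 2\theta \} $. Hence $ Q_{\epsilon, \theta } - z $ is elliptic at infinity on that sector (even with the gain of $ ( 1 + |x|^2 ) $ from the quadratic term), and the arguments of \cite[Lemmas 3.2--3.5]{SZ1}, applied with $ P = Q_\epsilon $, show that $ Q_{\epsilon, \theta } - z : H^2 ( \Gamma_\theta ) \to L^2 ( \Gamma_\theta ) $ is Fredholm of index zero with meromorphic inverse there.

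To prove \eqref{eq:l5} I would first establish it for $ z $ in the sub-sector $ \{ \delta < \arg z < 3\pi/2 - \delta \} $, $ \delta > 0 $, where each $ Q_{\epsilon, \theta } - z $ is invertible --- the numerical range of $ Q_{\epsilon, \theta } $ lies, up to a bounded set, in $ \{ - \pi/2 \leq \arg \zeta \leq 0 \} $, as one reads off \eqref{eq:Delth}, \eqref{fth2} and \eqref{fth21} --- and then let $ \delta \downarrow 0 $ and invoke analytic continuation in $ z $ together with the meromorphy above; this simultaneously shows $ Q_{\epsilon, \theta } $ has no spectrum in $ \{ - 2\theta < \arg z < 3\pi/2 + 2\theta \} $. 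Inside the sub-sector the argument is that of Lemma~\ref{l:2} word for word: it suffices to treat $ 0 \leq \theta_1 < \theta_2 \leq \theta $ with $ | \theta_1 - \theta_2 | \ll 1 $; for $ f \in L^2 ( B ( 0 , r_0 ) ) $ the functions $ u_j := ( Q_{\epsilon, \theta_j } - z )^{-1} \chi f $ solve $ ( Q_{\epsilon, \theta_j } - z ) u_j = 0 $ off $ B ( 0 , r_0 ) $, Lemma~\ref{l:ext} applied to the interpolating contours $ \Gamma_{ \theta_1, \theta_1 + s ( \theta_2 - \theta_1 ) , T } $ produces a holomorphic extension $ U $ of $ u_1 $, and the uniform-in-$ T $ bound \eqref{eq:UTG} --- proved via the semiclassical estimate \eqref{eq:utt}, which goes through here since after the rescaling $ x = T y $ the extra term $ - i\epsilon x_\Gamma^2 $ becomes $ - i\epsilon y^2 $, of size $ \gtrsim 1 $ and elliptic on $ |y| \sim 1 $ --- shows $ U|_{\Gamma_{\theta_2}} = u_2 \in L^2 ( \Gamma_{\theta_2} ) $; unique continuation for $ Q_{\epsilon, \theta_2 } - z $ then yields \eqref{eq:l5}.

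For the spectrum, $ 0 \leq \theta \leq \pi/8 $: by the rescaling in the proof of Lemma~\ref{l:3} the eigenfunctions of $ Q_{\epsilon, 0 } = H_{\epsilon, \pi/2 } $ are polynomials in $ \epsilon^{\frac14} e^{ - i\pi/8 } x $ times $ \exp ( - e^{ - i\pi/4 } \epsilon^{\frac12} x^2 / 2 ) $, which are entire, and on $ \Gamma_\theta $ the Gaussian factor still decays --- for $ z = g_\theta ( t ) \omega \in \Gamma_\theta $, $ \Re ( e^{ - i\pi/4 } z^2 ) = | g_\theta ( t ) |^2 \cos ( 2 \arg g_\theta ( t ) - \pi/4 ) > 0 $ when $ 0 \leq \arg g_\theta ( t ) \leq \theta \leq \pi/8 $ and $ t > 0 $, and equals $ t^2 \cos ( 2\theta - \pi/4 ) $ for $ t \geq T_0 $ --- so they restrict to $ L^2 ( \Gamma_\theta ) $ eigenfunctions of $ Q_{\epsilon, \theta } $, giving $ \sigma ( Q_{\epsilon, 0 } ) \subseteq \sigma ( Q_{\epsilon, \theta } ) $. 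For the reverse inclusion, let $ z $ be an eigenvalue of $ Q_{\epsilon, \theta } $ with eigenfunction $ u $, smooth on $ \Gamma_\theta $ by elliptic regularity; since $ ( Q_{\epsilon, \theta } - z ) u = 0 $ is the restriction of the holomorphic function $ 0 $, Lemma~\ref{l:ext} (with $ P = Q_\epsilon ( z , D_z ) - z $, along the family $ \Gamma_{\theta'} $, $ \theta' \in [ 0 , \theta ] $) extends $ u $ holomorphically and restricts it on $ \Gamma_0 = \RR^n $ to a solution $ \tilde u $ of $ ( Q_{\epsilon, 0 } - z ) \tilde u = 0 $; the uniform bound of the previous paragraph, now run from $ \Gamma_\theta $ toward $ \Gamma_0 $, gives $ \tilde u \in L^2 ( \RR^n ) $, so $ z \in \sigma ( Q_{\epsilon, 0 } ) $. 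Hence $ \sigma ( Q_{\epsilon, \theta } ) = \sigma ( Q_{\epsilon, 0 } ) = \sqrt\epsilon\, e^{ - i\pi/4 } ( n + 2 | \NN_0^n | ) $ by Lemma~\ref{l:3}.

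The step requiring the most care is the uniform-in-$ T $ estimate \eqref{eq:UTG} along the interpolating contours: one must check that the semiclassical elliptic inequality behind \eqref{eq:utt} still closes once the model $ - e^{ - 2 i\theta } h^2 \Delta - z $ is perturbed by the term carrying $ - i\epsilon e^{ 2 i\theta } h^{-2} x^2 $, and that $ \theta \leq \pi/8 $ with $ \epsilon_0 $ in \eqref{fth21} chosen small keeps the relevant symbol elliptic. Since that term points into the ellipticity cone and grows at infinity this is bookkeeping rather than a real obstruction; the substantive content is the symbol count of the first paragraph, which simultaneously fixes the sector $ - 2\theta < \arg z < 3\pi/2 + 2\theta $ and the range $ 0 \leq \theta \leq \pi/8 $.
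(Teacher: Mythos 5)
Your overall architecture — follow Lemma~\ref{l:2}, replace $-\Delta$ by the holomorphic operator $\sum_j (D_{z_j}^2-i\epsilon z_j^2)$, deform $\Gamma_{\theta_1}\to\Gamma_{\theta_2}$ through the interpolating contours $\Gamma_{\theta_1,\theta_2,T}$, and reduce to a uniform-in-$T$ $L^2$ bound — matches the paper's proof. The symbol count identifying the cone $\{\,e^{-2i\theta}\RR_+ \cup e^{i(2\theta-\pi/2)}\RR_+\,\}$ of width $\pi/2-4\theta$ and explaining both the range $0\le\theta\le\pi/8$ and the sector $-2\theta<\arg z<3\pi/2+2\theta$ is also correct and is the same picture drawn in Figure~2 of the paper.

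The genuine gap is the justification of the uniform-in-$T$ estimate replacing \eqref{eq:utt}. You assert that after the rescaling $x=Ty$, $h=1/T$, the extra term $-i\epsilon x_\Gamma^2$ ``becomes $-i\epsilon y^2$, of size $\gtrsim 1$ and elliptic on $|y|\sim 1$.'' That is not what happens: with $x=Ty$ one gets $-i\epsilon x^2 = -i\epsilon T^2 y^2 = -i\epsilon h^{-2}y^2$, a term that \emph{blows up} as $T\to\infty$ rather than staying $O(1)$. You even write the correct $h^{-2}x^2$ in your final paragraph, so the second and fourth paragraphs contradict each other. This means the perturbed model $-e^{-2i\theta}h^2\Delta_y - i\epsilon e^{2i\theta}h^{-2}y^2 - z$ is \emph{not} a bounded perturbation of the semiclassical symbol of Lemma~\ref{l:2}, and cannot be fed into \cite[Theorem 4.29]{e-z} as you propose. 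The growing term is in fact favourable (it points into the ellipticity cone), but turning that observation into a uniform inverse requires a different mechanism. The paper handles it by abandoning the local semiclassical parametrix altogether: it first proves a \emph{global} a priori inequality on $\Gamma_{\theta_1,\theta_2,T}$ by estimating $\Re\langle (e^{2i\theta_1}Q_{\theta_1,\theta_2,T}-i\tau)u,u\rangle \ge \|Du\|^2/C$ and $-\Im\langle\cdots\rangle \ge \tau\|u\|^2 - O(|\theta_2-\theta_1|)\|Du\|^2$ (using $\arg g_{\theta_1,\theta_2,T}'(t)$ control from \eqref{fth2}--\eqref{fth3} and $\Re(-ie^{4\theta})\ge 0$ for $\theta\le\pi/8$), combines this with the Fredholm property near infinity (where the operator is $e^{-2i\theta}H_{\epsilon,\pi/2-4\theta}$), and only then derives the local estimate \eqref{eq:zutt} by a cutoff/commutator argument. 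The paper also works with $z$ on the single ray $ie^{-2i\theta_1}(1,\infty)$ and analytically continues, rather than on a whole subsector; this is what makes the sign bookkeeping in the quadratic-form estimate clean. Your eigenfunction computation for the spectrum statement at the end is a valid alternative to the paper's (implicit) derivation of it from \eqref{eq:l5}, but the estimate in the middle paragraph does not close as written and needs to be replaced by the quadratic-form argument, or by a correctly rescaled elliptic estimate with $h=T^{-2}$ and a symbol class that records the quadratic growth.
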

\begin{proof} 
We follow the argument in the proof of Lemma \ref{l:2} and use
the notation introduced there. Hence it is enough to prove that
$ 0 \leq \theta_1 < \theta_2 \leq \pi/8 $ and $ |\theta_1 - \theta_2 | $
small it is enough to show that
\[ \chi ( Q_{ \epsilon , \theta_1 } -z )^{-1} \chi = 
\chi ( Q_{\epsilon,  \theta_2 } - z )^{-1} \chi .\]
We only need to establish this for $ z \in e^{  i ( - 2\theta_1 + \pi /2 )} ( 1, \infty ) $ as 
then the result follows by analytic continuation. 
 The only difference is an estimate which replaces \eqref{eq:utt}: for $ \tau > 1 $, 
\begin{gather}
\label{eq:zutt} 
\begin{gathered} 
\| u \|_{ L^2 (\Omega_1 ( T ) ) } \leq  C \| ( Q_{ \Gamma_{\theta_1, 
\theta_2 , T } }  - ie^{ - 2 
\theta_1 } \tau ) 
u \|_{ L^2 ( \Omega_2 ( T ) )} + C \| u \|_{ L^2 ( \Omega_2 ( T ) \setminus \Omega_1 ( T ) )} , \\
Q_{\theta_1, 
\theta_2 , T } :=  - \Delta_{ \Gamma_{\theta_1, 
\theta_2 , T } } - i \epsilon (x|_{\Gamma_{\theta_1, \theta_2, T } })^2 
\end{gathered} \end{gather}
uniformly for $ T \gg 1 $. 
To see this we first note that for $ \epsilon > 0 $, 
$  Q_{ {\theta_1, 
\theta_2 , T } } - z  $, $ z \in \CC $, 
is a Fredholm operator (since it is elliptic and near infinity it is equal to 
$ e^{ -2 i \theta } H_{\epsilon , \pi/2 - 4 \theta } $). 

To obtain an estimate we notice that 
for $ t > T $
\[ g_{\theta_1, \theta_2, T}' ( t ) = \chi ( t/T ) e^{i\theta_2 } + 
( 1 - \chi ( t/T ) ) e^{ i \theta_2 } + ( t/T ) \chi' ( t/T ) ( e^{ i \theta_2} -
e^{ i \theta_2 } )  ,\]
so that from \eqref{fth2} and \eqref{fth3},
\[  \theta_1 - C | \theta_2 - \theta_1| \leq \arg g_{\theta_1 , \theta_2, T } ' ( t ) 
\leq \theta_2 .\]

Also, $ \theta_1 \leq  \arg g_{\theta_1, \theta_2, T } ( t ) \leq \theta_2 $.
Hence,
\[  \Re \langle  
( e^{  2 i \theta_1 } Q_{ {\theta_1, 
\theta_2 , T } }  - i \tau) 
u , u \rangle \geq  \| D u \|^2 
/C 
\]
where we used the fact that for for $ 0 \leq \theta 
\leq \pi/8 $, $\Re ( - i e^{ 4 \theta } ) \geq 0 $. 
The imaginary part is then estimated as follows, 
\[   \begin{split}  - \Im \langle (e^{  2 i \theta_1 }
 Q_{ {\theta_1, 
\theta_2 , T } }  - i \tau ) u , 
u \rangle &  \geq \tau \| u \|_{ L^2 ( \Gamma_{\theta_1, \theta_2, T }  ) }
- \mathcal O ( | \theta_2 - \theta_1 |) 
\| D u \|^2 .\end{split} 
\]
We conclude that when $ |\theta_2 - \theta_1| $ is small enough 
\[ \| (Q_{ \Gamma_{\theta_1, 
\theta_2 , T } }  - i e^{ - 2 i \theta_1 } \tau ) u \|
\geq (\| u \|  +  \| D u \| ) / C ,  \]
This and the Fredholm property imply that 
\[  (Q_{{\theta_1, 
\theta_2 , T } }  - i e^{ - 2 i \theta_1 } \tau ) ^{-1} = \mathcal O ( 1 ) :
L^2 ( \Gamma_{ \theta_1, 
\theta_2 , T }  ) \to H^1 (\Gamma_{ \theta_1, 
\theta_2 , T }  )  . \]
that is the operator is invertible with bounds independent of $ T $.  
From this \eqref{eq:zutt} follows by a standard localization argument:
we choose $ \chi_T \in \CI ( \Omega_2 ( T )  , [ 0 , 1 ] ) $, 
such that $ \chi_T = 1 $ on $ \Omega_1 ( T ) $ with derivative bounds
independent of $ T $. We then apply 
the inverse above to $  (Q_{{\theta_1, 
\theta_2 , T } }  - i e^{ - 2 i \theta_1 } \tau ) \chi_T u $ with 
the commutator terms estimated by $ \| u \|_{ L^2 ( \Omega_2 ( T ) 
\setminus \Omega_1 ( T ) ) }$. 
\end{proof}

The next lemma shows how complex scaling dramatically improves
the exponential bound \eqref{eq:l31}:
\begin{lem}
\label{l:4}
Suppose that $ 0 \leq \theta \leq \pi/8 $ and that 
$ \Omega \Subset \{ z :  -2 \theta < \arg z < 3 \pi /2 + 2 \theta \} $. 
Then there exists $ C = C ( \Omega ) $ (in particular independent of $ \epsilon > 0 $)
such that 
\[  \| ( Q_{ \epsilon, \theta } - z )^{-1} \|_{ L^2 \to L^2 } \leq C , \ \ 
z \in \Omega . \]
\end{lem}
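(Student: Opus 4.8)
The plan is to reduce the claim to a uniform a priori estimate and then prove that estimate by gluing an exterior model governed by Lemma~\ref{l:3} to an interior piece on a fixed ball, treating the regimes $\epsilon\to 0$, $\epsilon$ bounded, and $\epsilon\to\infty$ separately.

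First I would collect the structural input. By Lemma~\ref{l:5}, for $0\le\theta\le\pi/8$ one has $\sigma(Q_{\epsilon,\theta})=\sqrt\epsilon\,e^{-i\pi/4}(n+2|\mathbb N_0^n|)\subset e^{-i\pi/4}[0,\infty)$, a ray disjoint from the open sector $\{-2\theta<\arg z<3\pi/2+2\theta\}$; since $\Omega$ is relatively compact in that sector, which moreover does not contain $0$, we get $\operatorname{dist}(\overline\Omega,\sigma(Q_{\epsilon,\theta}))\ge\rho_0>0$ and $\operatorname{dist}(\overline\Omega,0)\ge\rho_0$, uniformly in $\epsilon$. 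Moreover $Q_{\epsilon,\theta}$ is elliptic and, for $|x|>T_0$ (see \eqref{fth3}), equals $e^{-2i\theta}H_{\epsilon,\pi/2-4\theta}$ — as already used in the proof of Lemma~\ref{l:5} — so $Q_{\epsilon,\theta}-z$ is Fredholm of index $0$ on $L^2(\Gamma_\theta)$ for every $z$. Hence it suffices to prove the a priori bound $\|u\|_{L^2}\le C\|(Q_{\epsilon,\theta}-z)u\|_{L^2}$ with $C$ independent of $\epsilon>0$ and $z\in\Omega$.

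For the a priori estimate I would use the standard gluing. On $\{|x|>T_0\}$ the operator is $e^{-2i\theta}(H_{\epsilon,\pi/2-4\theta}-e^{2i\theta}z)$; since $z\in\Omega$ forces $\arg(e^{2i\theta}z)\in(\delta,2\pi-(\pi/2-4\theta)-\delta)$ for some $\delta>0$ and $|e^{2i\theta}z|\ge\rho_0$, the bound \eqref{eq:l32} gives $\|(e^{-2i\theta}H_{\epsilon,\pi/2-4\theta}-z)^{-1}\|\le C_2/\rho_0$ uniformly in $\epsilon$; cutting off and commuting this inverse against $Q_{\epsilon,\theta}-z$ controls $\|(1-\chi)u\|$ by $C(\|(Q_{\epsilon,\theta}-z)u\|+\|u\|_{L^2(B(0,3T_0))})$ for a fixed $\chi$ supported in $B(0,2T_0)$. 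It then remains to bound the interior part $\|\chi u\|$ uniformly in $\epsilon$, and here I would split into three regimes. For $\epsilon$ in a fixed compact subinterval of $(0,\infty)$ the family $\epsilon\mapsto Q_{\epsilon,\theta}$ is norm‑resolvent continuous and spectrum‑free over $\overline\Omega$, so $\|(Q_{\epsilon,\theta}-z)^{-1}\|$ is continuous, hence bounded, on the compact set (alternatively one may run a numerical‑range argument after multiplying by $e^{2i\theta}$ — legitimate because $\theta\le\pi/8$ forces $\Re(-ie^{4i\theta})=\sin 4\theta\ge 0$ and $\cos 2\theta>0$, exactly as in the proof of Lemma~\ref{l:5} — and combine it with the exterior estimate). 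For $\epsilon\to 0$ and $\epsilon\to\infty$ I would rescale: the dilation $x\mapsto\epsilon^{1/4}x$ conjugates $Q_{\epsilon,\theta}$ to $\sqrt\epsilon\,Q^{(\epsilon^{1/4})}_{1,\theta}$, where $Q^{(\lambda)}_{1,\theta}=-\Delta_\theta-ix_\theta^2$ on the contour that is flat in $B(0,\lambda r_0)$ and rotated outside $B(0,\lambda T_0)$, so that $\|(Q_{\epsilon,\theta}-z)^{-1}\|=\epsilon^{-1/2}\|(Q^{(\epsilon^{1/4})}_{1,\theta}-\epsilon^{-1/2}z)^{-1}\|$. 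As $\epsilon\to 0$ this contour rotates fully and $Q^{(\lambda)}_{1,\theta}\to e^{-2i\theta}H_{1,\pi/2-4\theta}$, whose resolvent at $\epsilon^{-1/2}z$ (of modulus $\ge\rho_0\epsilon^{-1/2}\to\infty$) is $\le C_2\sqrt\epsilon/\rho_0$ by \eqref{eq:l32}, giving a uniform bound; as $\epsilon\to\infty$ the contour becomes flat and $Q^{(\lambda)}_{1,\theta}\to H_{1,\pi/2}$, whose resolvent at $\epsilon^{-1/2}z$ (now of small modulus, away from the fixed spectrum $e^{-i\pi/4}(n+2|\mathbb N_0^n|)$) is bounded, so $\|(Q_{\epsilon,\theta}-z)^{-1}\|=O(\epsilon^{-1/2})$.

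The main obstacle is precisely this uniform control of the interior piece as $\epsilon\to 0$ and $\epsilon\to\infty$: it rests on norm‑resolvent convergence of the rescaled contour operators $Q^{(\lambda)}_{1,\theta}$ to $e^{-2i\theta}H_{1,\pi/2-4\theta}$ (as $\lambda\to 0$) and to $H_{1,\pi/2}$ (as $\lambda\to\infty$), uniformly enough in the spectral parameter $w=\epsilon^{-1/2}z$. I would obtain this by observing that the difference of the two operators is supported where $|x|\lesssim\lambda T_0$ (resp. $|x|\gtrsim\lambda r_0$), a region which escapes the length scale on which $(e^{-2i\theta}H_{1,\pi/2-4\theta}-w)^{-1}$ (resp. $(H_{1,\pi/2}-w)^{-1}$) is effectively supported, so that the corresponding products of operators tend to $0$ in operator norm. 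Everything else is the deformation and parametrix bookkeeping of \cite{SZ1}, together with Lemmas~\ref{l:3} and \ref{l:5}.
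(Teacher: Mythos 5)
Your proposal follows a genuinely different route from the paper. The paper does not rescale and does not run a case analysis in $\epsilon$; instead it constructs an explicit parametrix
\[
T_{\epsilon,\theta}^h(z)\;=\;\chi_0^h\,(-\Delta_\theta-z)^{-1}\chi_1^h\;+\;(1-\chi_1^h)\,e^{2i\theta}\,(H_{\epsilon,\gamma}-e^{2i\theta}z)^{-1}(1-\chi_2^h),\qquad \gamma=\tfrac{\pi}{2}-4\theta,
\]
with cutoffs living at the spatial scale $|x|\sim h^{-1}$, and computes that the remainder $K_{\epsilon,\theta}^h(z)=(Q_{\epsilon,\theta}-z)T_{\epsilon,\theta}^h(z)-I$ has norm $\mathcal O(\epsilon h^{-2})+\mathcal O(h)$ — the first term from $-i\epsilon x_\theta^2\chi_0^h(-\Delta_\theta-z)^{-1}\chi_1^h$, the second from the commutators $[\Delta_\theta,\chi_j^h]$. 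Fixing $h$ small and then $\epsilon<\epsilon_0(h)$ makes $K$ a contraction, giving the uniform bound at once for $0\le\epsilon<\epsilon_0$; the large-$\epsilon$ regime is then dispatched by invoking Lemma~\ref{l:5}. Your exterior estimate via \eqref{eq:l32} and your numerical-range observation $\Re(-ie^{4i\theta})\ge0$ for $\theta\le\pi/8$ are the same ingredients the paper uses, but where they enter is different: the paper feeds them into a gluing parametrix, you try to feed them into an a priori estimate closed by rescaling.

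The gap is precisely the step you flag as the main obstacle, and it is a real one. Your treatment of the regime $\epsilon\to 0$ requires norm-resolvent control of $Q^{(\lambda)}_{1,\theta}$ versus $e^{-2i\theta}H_{1,\pi/2-4\theta}$ as $\lambda=\epsilon^{1/4}\to 0$ \emph{simultaneously} with the spectral parameter $w=\epsilon^{-1/2}z$ escaping to infinity at rate $|w|\sim\lambda^{-2}$. After rescaling, the contour transition sits on the spatial scale $\lambda$, and the natural semiclassical length scale for the resolvent at $w$ is $|w|^{-1/2}\sim\lambda$ — exactly the borderline regime where the two effects are not separable. Concretely, the difference of the two operators is a \emph{second-order} operator with $O(1)$ principal coefficients supported in $\{|x|\lesssim\lambda T_0\}$, so a Neumann/resolvent-identity argument $(A-w)^{-1}=(I+(B-w)^{-1}(A-B))^{-1}(B-w)^{-1}$ produces no smallness: $(B-w)^{-1}(A-B):L^2\to L^2$ is only $O(1)$ by ellipticity. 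The heuristic you offer — that $\{|x|\lesssim\lambda T_0\}$ ``escapes the length scale on which the resolvent is effectively supported'' — is also not right: for $|w|$ large in the allowed sector the kernel of $(H_{1,\gamma}-w)^{-1}$ acts nontrivially near $x=0$, just at high frequencies $|\xi|\sim|w|^{1/2}$. So the argument, as written, does not close. The paper avoids the whole issue by never coupling the two scales: the only $\epsilon$-dependent error in its parametrix is the single, explicitly estimable term $-i\epsilon x_\theta^2\chi_0^h(-\Delta_\theta-z)^{-1}\chi_1^h=\mathcal O(\epsilon h^{-2})$, which is made small by choosing $h$ first.
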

\begin{proof}
Let $ \chi_j \in \CIc ( [0, \infty ) ) $ be equal to $ 1 $ on $ [0, r_0 ] $
and satisfy $ \chi_j = 1 $ on $ \supp \chi_{j+1} $, $ j = 0,1.$
Parametrizing $ \Gamma_\theta $ by $ F_\theta: [ 0, \infty )_t
\times \mathbb S^{n-1} \to \Gamma_\theta $,  $ F_\theta ( t , \omega) = 
g_\theta ( t )\omega $ (with $ g_\theta $ given in \eqref{eq:defgt})
we define functions $ \chi_j^h 
\in \CIc ( \Gamma_\theta ) $ as 
\[  \chi_j^h \circ F_\theta ( t , \omega ) := \chi_j ( t h ) , \ \ 0 < h \leq 1 .\]
In view of \eqref{fth3} and \eqref{eq:Delth} we see that for $ h $ small
enough
\[ \begin{split}   Q_{\epsilon, \theta } ( 1 - \chi_1^h ) & = ( - e^{ -2 i \theta } \Delta_x - i 
\epsilon e^{ 2 i \theta } x^2 ) ( 1 - \chi_1^{h} ) \\
& = 
e^{ - 2 i \theta} H_{ \epsilon, \gamma } ( 1 - \chi_1^h ) , \ \ 
\gamma := \pi/2 - 4 \theta, \ \ \
x = t \omega . \end{split} \]
In view of \eqref{eq:l32} we have
\begin{equation}
\label{eq:Heps}  ( 1 - \chi_2^h )  e^{ 2 i \theta } ( H_{ \epsilon , \gamma } - e^{ 2 i \theta} z )^{-1} ( 1 - \chi_2^h ) = 
\mathcal O_\delta ( 1 ) : L^2 ( \Gamma_\theta ) \to H^2 ( \Gamma_\theta ) , \end{equation} 
for 
\[ - \delta < 2 \theta + \arg z < 2 \pi - \gamma - \delta = 3 
\pi/2  + 4 \theta - \delta , \ \ \ |z| > \delta ,  \]
and in particular for $ z \in \Omega $. We stress that the bounds are independent
of $ \epsilon $.

Noting that 
\begin{equation}
\label{eq:Deleps}   ( - \Delta_\theta - z )^{-1} = \mathcal O ( 1 ) : L^2 ( \Gamma_\theta) 
\to H^2 ( \Gamma_\theta ) ,  \ \ z \in \Omega , \end{equation}
(for $ 0 \leq \theta \leq \pi/8$, $ - 2 \theta < \arg z <  2 \pi - 2 \theta $)
we now put
\[  T_{\epsilon, \theta}^h ( z )  :=  \chi_0^h ( - \Delta_\theta - z)^{-1} \chi_1^h
+ ( 1 - \chi_1^h )  e^{ 2 i \theta } ( H_{ \epsilon , \gamma } - e^{ 2 i \theta} z )^{-1} ( 1 - \chi_2^h ) ,\]
so that
$   ( Q_{\epsilon , \theta } - z ) T_{\epsilon , \theta}^h ( z ) = 
I + K_{ \epsilon, \theta}^h ( z ) $,
where 
\[ \begin{split} K_{\epsilon, \theta }^h ( z ) := & - i \epsilon x_\theta^2 \chi_0^h 
( - \Delta_\theta -  z)^{-1} \chi_1^h - [ \Delta_\theta, \chi_0^h ] ( - \Delta_\theta -  z)^{-1} \chi_1 \\
& \ \ \ + [ \Delta_\theta, \chi_1^h] e^{ 2 i \theta } ( 1 - \chi_2^h) ( H_{ \epsilon , \gamma } - e^{ 2 i \theta} z )^{-1} ( 1 - \chi_2^h ) . 
\end{split} \]
Since $ [ \Delta_\theta , \chi^h_j] = \mathcal O ( h ) : H^1 ( \Gamma_\theta ) 
\to L^2 ( \Gamma_\theta ) $ and $ x^2_\theta \chi_1^h = 
\mathcal O ( h^{-2} ) : L^2 ( \Gamma_\theta ) \to L^2 ( \Gamma_\theta ) $, 
we conclude from \eqref{eq:Heps} and \eqref{eq:Deleps}
that for $ z \in \Omega $, 
\[ K_{\epsilon , \theta }^h  ( z ) = \mathcal O ( h^{-2} \epsilon ) +
\mathcal O ( h ) : L^2 ( \Gamma_\theta ) \to L^2 ( \Gamma_\theta ) .\]
Hence by choosing $ h $ first, we see that for $ \epsilon < \epsilon_0 ( h ) $, 
$  I + K_{\epsilon , \theta }^h( z)  $ has a uniformly bounded inverse
and $ 0 \leq \epsilon < \epsilon_0 $
\[ ( Q_{\epsilon , h }  - z )^{-1} = T_{\epsilon, \theta}^h ( z ) ( I + K_{\epsilon, 
\theta } ^h ( z ) )^{-1} = \mathcal O ( 1 ) : L^2 ( \Gamma_\theta ) \to 
L^2 ( \Gamma_\theta ) , \ \ \ z \in \Omega . \]
In view of Lemma \ref{l:5} we know that for $ z \in \Omega $, 
$ ( Q_{\epsilon , h }  - z )^{-1}$ exists for $ \epsilon > \epsilon_0 $ 
and that gives the bound for all values $ \epsilon $. 
\end{proof}

\section{Meromorphic continuation}
\label{mc}

In this section we will review the meromorphy of the resolvent $ R_V ( z ) $, see \eqref{eq:Rofz}, in a way connecting it to the resolvent of $ P_\epsilon $
given in \eqref{eq:Peps}, $ \epsilon \geq 0 $. 
For that we define
\begin{equation}
\label{eq:Reps}
R_\epsilon ( z ) = ( - \Delta - i \epsilon x^2 - z )^{-1} , \ \ 
R_{V , \epsilon } ( z ) = ( - \Delta - i \epsilon x^2 + V - z )^{-1} , \ \ 
\epsilon \geq 0 . 
\end{equation}
For $ \epsilon > 0 $, these operators are meromorphic for $ z \in \CC $
as operators on $ L^2 $. For $ \epsilon = 0 $, $ R_0 ( z ) $ is
holomorphic in the sense of \eqref{eq:Rofz} 
on the double cover of $ \CC \setminus \{ 0 \} $ when
$ n $ is odd and on the logarithmic cover when $ n $ is even 
-- see for instance \cite[\S 3.1]{res}. We are only concerned with 
continuation to $ \arg z \geq - \pi/4 $. 

Let $ \rho \in \CIc ( \RR^n ; [ 0 , 1 ] ) $ be equal to $ 1 $ on a 
neighbourhood of $ \supp V $. We have 
\begin{lem} 
\label{l:mult}
For $ \epsilon \geq 0 $ 
\[  z \mapsto  ( I + V  R_\epsilon ( z )\rho )^{-1} , \ \ \ 
  - \pi /4 < \arg z < 7 \pi/4 , 
\]
is a meromorphic family of operators on $ L^2 ( \RR^n ) $ for 
with poles of finite rank. 
Then 
\begin{equation}
\label{eq:Pieps}
m_\epsilon ( z ) := \frac{ 1}{ 2 \pi i} \tr
\oint_z   ( I + V R_\epsilon ( w ) \rho)^{-1}\partial_w (V R_\epsilon ( w ) 
\rho ) dw , 
\end{equation}
where the integral is over a positively oriented circle enclosing $ z $
and containing no poles other than possibly $ z $, satisfies \begin{equation}
\label{eq:lmult}
m_\epsilon ( z ) = \left\{ \begin{array}{ll}  \frac{1}{ 2 \pi i } \oint_z 
( w - P_\epsilon )^{-1} dw , &  \epsilon > 0 \\
\ & \ \\
\ \ \ \ \ \ m ( z ) , &  \epsilon = 0 , \end{array} \right. 
\end{equation}
where  $ m ( z ) $ is the multiplicity of the resonance $ z $
given by \eqref{eq:mz}. 
\end{lem}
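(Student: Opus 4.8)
The proof has two components: the meromorphy of $z\mapsto (I+VR_\epsilon(z)\rho)^{-1}$ with finite rank poles, and the two evaluations of $m_\epsilon(z)$ in \eqref{eq:lmult}. Throughout I would phrase things in terms of the Gohberg--Sigal multiplicity $m_{\mathrm{GS}}(A,z_0):=\frac1{2\pi i}\tr\oint_{z_0}A(w)^{-1}A'(w)\,dw$ of a finitely meromorphic Fredholm family $A$, so that by construction $m_\epsilon(z)=m_{\mathrm{GS}}(I+VR_\epsilon(\cdot)\rho,z)$, the integrand in \eqref{eq:Pieps} being $A^{-1}A'$ for $A(w)=I+VR_\epsilon(w)\rho$. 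One first records that $\frac1{2\pi i}\oint_z A(w)^{-1}A'(w)\,dw$ is a \emph{finite rank} operator, its Laurent coefficients being built from the finite rank principal part of $A(w)^{-1}$; this makes the trace in \eqref{eq:Pieps} unambiguous even though, for large $n$, the integrand $\partial_w(VR_\epsilon(w)\rho)=VR_\epsilon(w)^2\rho$ itself need not be trace class. For the meromorphy statement I would then apply the analytic Fredholm theorem. The operator $VR_\epsilon(z)\rho$ is compact on $L^2(\RR^n)$ --- for $\epsilon>0$ because $-\Delta-i\epsilon x^2$ has compact resolvent (Lemma \ref{l:3}), for $\epsilon=0$ because $V,\rho$ are compactly supported and $R_0(z)$ is $2$-smoothing (Rellich) --- and it is holomorphic in $\{-\pi/4<\arg z<7\pi/4\}$: for $\epsilon>0$ the only poles of $R_\epsilon(z)$ lie on $\arg z=-\pi/4$ by Lemma \ref{l:3}, while for $\epsilon=0$ the continued free resolvent $R_0(z):L^2_{\mathrm{comp}}\to H^2_{\mathrm{loc}}$ is holomorphic there (\cite[\S 3.1]{res}), away from the threshold $z=0$ when $n$ is even. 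Finally, for $\Im z$ large the identity $z-P_\epsilon=(I+VR_\epsilon(z))(z+\Delta+i\epsilon x^2)$, together with $z$ lying in the resolvent set of both $P_\epsilon$ and $-\Delta-i\epsilon x^2$, shows $I+VR_\epsilon(z)$ is invertible; since $\supp V\subset\{\rho=1\}$, any $u\in\ker(I+VR_\epsilon(z)\rho)$ satisfies $\rho u=u$, hence $u\in\ker(I+VR_\epsilon(z))=\{0\}$, and as $I+VR_\epsilon(z)\rho$ is $I$ plus a compact operator, injectivity gives invertibility at that point. The analytic Fredholm theorem then yields meromorphy with finite rank poles on the whole connected region.

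For the case $\epsilon>0$, writing $VR_\epsilon(z)\rho=V\cdot(R_\epsilon(z)\rho)$ and using $\rho V=V$, the $I+AB\leftrightarrow I+BA$ invariance of $m_{\mathrm{GS}}$ gives $m_\epsilon(z)=m_{\mathrm{GS}}(I+R_\epsilon(\cdot)V,z)$. Now $I+R_\epsilon(z)V=R_\epsilon(z)(P_\epsilon-z)$, a product of the holomorphic invertible family $R_\epsilon(z):L^2\to D(P_\epsilon)$ and the family $w\mapsto P_\epsilon-w:D(P_\epsilon)\to L^2$; multiplicativity of $m_{\mathrm{GS}}$ discards the first factor, leaving $m_\epsilon(z)=m_{\mathrm{GS}}(P_\epsilon-\cdot,z)$, which is the algebraic multiplicity of $z$ as an eigenvalue of $P_\epsilon$, i.e. $\tr\bigl(\tfrac1{2\pi i}\oint_z(w-P_\epsilon)^{-1}dw\bigr)=\rank\bigl(\tfrac1{2\pi i}\oint_z(w-P_\epsilon)^{-1}dw\bigr)$, as in the first line of \eqref{eq:lmult}. (Discreteness of $\sigma(P_\epsilon)$, used here implicitly, also drops out: $z-P_\epsilon$ is Fredholm of index $0$ and invertible off a discrete set.)

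For the case $\epsilon=0$, the second line of \eqref{eq:lmult} is the classical identification of the resonance multiplicity \eqref{eq:mz} with the multiplicity of $z$ as a zero of the (suitably regularized) Fredholm determinant of $I+VR_0(z)\rho$, equivalently with $m_{\mathrm{GS}}(I+VR_0(\cdot)\rho,z)$. Its input is the meromorphic continuation formula $R_V(z)=R_0(z)-R_0(z)(I+VR_0(z)\rho)^{-1}VR_0(z)$, which holds for $\Im z$ large by the second resolvent identity (everything localizing since $\supp V$ is compact) and then continues; because $R_0(z)$ and $VR_0(z)$ are holomorphic in the region, the poles of $R_V$ there coincide with the characteristic values of $I+VR_0(\cdot)\rho$, and the Gohberg--Sigal ``equivalence of pole structures under holomorphic bordering'' identifies $\rank\oint_z R_V(\zeta)\,d\zeta$ with $m_{\mathrm{GS}}(I+VR_0(\cdot)\rho,z)$; I would cite \cite[\S 3.1--3.2]{res} for this. (Alternatively, using Lemma \ref{l:2} with a cutoff $\chi\equiv1$ on $\supp V\cup\supp\rho$ gives $VR_0(z)\rho=V(-\Delta_\theta-z)^{-1}\rho$ for $\theta\in(0,\pi/8)$, and running the previous paragraph's argument with $P_{0,\theta}=-\Delta_\theta+V$ in place of $P_\epsilon$ identifies $m_0(z)$ with the algebraic multiplicity of $z$ as an eigenvalue of $P_{0,\theta}$, which is $m(z)$ by the standard properties of complex scaling, \cite{SZ1}.)

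The step I expect to demand the most care is structural rather than analytic: setting up the Gohberg--Sigal calculus (the $I+AB\leftrightarrow I+BA$ invariance, multiplicativity, and the bordering equivalence used when $\epsilon=0$) at the level of precision needed, keeping track of the domains $D(P_\epsilon)$, $D(-\Delta_\theta)$ on which the various factorizations actually live, and making sure the region $\{-\pi/4<\arg z<7\pi/4\}$ stays clear of the spectra that would otherwise introduce spurious poles --- the ray $e^{-i\pi/4}[0,\infty)$ for $\epsilon>0$ and the threshold $z=0$ in even dimensions for $\epsilon=0$.
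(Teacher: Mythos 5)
Your proof is correct and follows essentially the same route as the paper: compactness of $VR_\epsilon(z)\rho$ plus one-point invertibility plus the analytic Fredholm theorem for meromorphy, and Gohberg--Sigal calculus for the multiplicity identities. The paper's execution differs in two minor respects: it obtains one-point invertibility by combining the uniform bound $R_\epsilon(z)=\mathcal O(1/|z|)$ with a Neumann series for large $|z|$ rather than your kernel argument $\ker(I+VR_\epsilon(z)\rho)\subset\ker(I+VR_\epsilon(z))$, and for the multiplicity it writes a single factorization $R_{V,\epsilon}(z)=R_\epsilon(z)(I+VR_\epsilon(z)\rho)^{-1}(I-VR_\epsilon(z)(1-\rho))$ valid for all $\epsilon\geq 0$ and then delegates both cases of \eqref{eq:lmult} to the argument of \cite[Theorem 3.23]{res}, whereas you treat $\epsilon>0$ self-containedly via the $I+AB\leftrightarrow I+BA$ invariance and multiplicativity of the Gohberg--Sigal index, and cite the standard theory only for $\epsilon=0$. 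These are different packagings of the same underlying mathematics, and your version is arguably more explicit about which Gohberg--Sigal facts are actually used.
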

\begin{proof}
We recall the standard argument (see \cite[\S 2.2, 3.2]{res} and 
references given there). For any 
$ \delta > 0 $ and uniformly in $ \epsilon \geq 0 $,  
\begin{equation}
\label{eq:Repsz}  R_\epsilon ( z ) = \mathcal O_\delta ( 1/|z| ) : L^2 ( \RR^n ) \to L^2 ( \RR^n ) , \ \ \ 
 \delta < \arg z < 3 \pi/2 - \delta , \ \ 
|z| > \delta .  \end{equation}
This follows from self-adjointness for $ \epsilon = 0 $ and 
from \eqref{eq:l32} for $ \epsilon > 0 $. 

For $ z $ in \eqref{eq:Repsz} and $ Q_\epsilon := - \Delta - i \epsilon x^2 $, 
\begin{equation}
\label{eq:RPeps}  \begin{split}  ( P_\epsilon - z ) & =  (  Q_{\epsilon} - z ) ( I + R_\epsilon ( z ) V ) \\
& = 
 ( I + V R_\epsilon ( z )\rho ) ( I + V R_\epsilon ( z ) ( 1 - \rho ) ) (  Q_{\epsilon} - z ) . \end{split}
\end{equation}
Noting that 
\[ ( I + V R_\epsilon ( z ) ( 1 - \rho ) )^{-1} = I - V R_\epsilon ( z ) ( 1 - \rho ) \]
we obtain from \eqref{eq:Repsz} and \eqref{eq:RPeps} that
\begin{gather}  
\label{eq:RVeps} 
\begin{gathered} R_{ V , \epsilon } ( z ) = R_\epsilon ( z ) ( I + V R_\epsilon ( z ) \rho )^{-1} 
( I - V R_\epsilon ( z ) ( 1 - \rho ) ) , \\ 
\delta < \arg z < 3 \pi/2 - \delta , \ \ | z| \gg 1 , 
\end{gathered}
\end{gather}
where for large $ |z| $, $  I + V R_\epsilon ( z ) \rho $ is invertible 
by a Neumann series argument. Since $ z \mapsto V R_\epsilon ( z ) \rho $ is 
a holomorphic family of compact operators for $ - \pi/4 < \arg z < 3 \pi /4 $
(see Lemma \ref{l:3} for the case $ \epsilon > 0 $), 
$ z \mapsto ( I + V R_\epsilon ( z ) \rho ) ^{-1} $
is a meromorphic family operators in the same range of $ z$. (For 
$ \epsilon >0 $ the meromorphy is in fact valid for $ z \in \CC $ -- 
see \cite[\S C.4]{res}.) The formula \eqref{eq:RVeps} remains valid
for that range of $ z$ with boundedness on $ L^2 $ for $ \epsilon > 0 $.
For $ \epsilon = 0 $ we note that
\[    ( I - V R_0 ( z ) ( 1 - \rho ) ) \ ,  (  I + V R_0 ( z ) )^{-1} 
 : L^2_{\rm{comp}} \to L^2_{\rm{comp}},  \ \ 
 R_0 ( z ) : L^2_{\rm{comp} } \to L^2_{loc} , \]
and we obtain the meromorphic continuation of $ R_{V, 0} ( z ) :
L^2_{\rm{comp} } \to L^2_{loc} $. Arguing as in the proof of
\cite[Theorem 3.23]{res} we obtain the multiplicity formula \eqref{eq:lmult}.
(This can also seen using complex scaling as reviewed in the proof of
Theorem \ref{t:2} below.)
\end{proof}

\section{Proof of convergence}
\label{poc}

The proof of convergence is based on Lemma \ref{l:mult} and on the 
following lemma in which we use the complex variable techniques of \S\S \ref{cs},\ref{dho}.

\begin{lem}
\label{l:fbound}
For $ \chi \in \CIc ( \RR^n ) $ consider 
\begin{equation} 
\label{eq:Teps}   T^\chi_\epsilon ( z )  := \chi ( - \Delta - i \epsilon x^2 - z )^{-1} x^2  ( - \Delta - z )^{-1} \chi , \ \  0 < \arg z < 3 \pi /2 . 
\end{equation} 
Then $ T_\epsilon^\chi $ continues to a holomorphic family of operators
\[  T_\epsilon^\chi ( z ) : L^2 \to L^2  , \ \ - \pi/4 < \arg z < 7 \pi /4. \]
If $ \Omega \Subset \{ z: - \pi /4 < \arg z < 3 \pi /2 \} $ then 
there exists $ C = C_{\Omega, \chi } $ (independent of $ \epsilon $) such that
\begin{equation}
\label{eq:lfbound}   \| T_\epsilon^\chi ( z ) \|_{ L^2 \to L^2 } \leq C , \ \ 
z \in \Omega, \ \ \epsilon > 0 . \end{equation}
\end{lem}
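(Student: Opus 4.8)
The plan is to reduce the claim to the scaled operator estimates established in Lemmas~\ref{l:4} and~\ref{l:5}, using complex scaling to trade the $L^2_{\rm{comp}}\to L^2_{\rm{loc}}$ bounds of the free resolvent for genuine $L^2$ bounds on $\Gamma_\theta$. First I would fix $\theta=\pi/8$ (so that $-2\theta<\arg z<3\pi/2+2\theta$ covers any compact $\Omega\Subset\{-\pi/4<\arg z<3\pi/2\}$), choose $\chi_0\in\CIc(B(0,r_0))$ equal to $1$ near $\supp\chi$, and factor $T_\epsilon^\chi(z)$ by inserting $\chi_0$'s and applying Lemma~\ref{l:5} to rewrite $\chi(-\Delta-i\epsilon x^2-z)^{-1}\chi_0=\chi(Q_{\epsilon,\theta}-z)^{-1}\chi_0$ and Lemma~\ref{l:2} to rewrite $\chi_0(-\Delta-z)^{-1}\chi=\chi_0(-\Delta_\theta-z)^{-1}\chi$. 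The point of these identities is that on $\Gamma_\theta$ the cutoff $\chi$ is supported where $g_\theta(t)=t$, so the deformed operators literally agree with the undeformed ones there, and the composition
\[
T_\epsilon^\chi(z)=\chi(Q_{\epsilon,\theta}-z)^{-1}\,\big(x_\theta^2\chi_0\big)\,(-\Delta_\theta-z)^{-1}\chi
\]
makes sense as an operator on $L^2(\Gamma_\theta)\cong L^2(\RR^n)$ once we note $x_\theta^2\chi_0=x^2\chi_0$ is a bounded multiplication operator (its support is compact). The meromorphic/holomorphic continuation statement is then automatic: for $\epsilon>0$ the operator $(Q_{\epsilon,\theta}-z)^{-1}$ is meromorphic in $z\in\CC$ by Lemma~\ref{l:3} (applied with $\gamma=\pi/2-4\theta$, after conjugating by $e^{2i\theta}$), $(-\Delta_\theta-z)^{-1}$ is holomorphic for $-2\theta<\arg z<2\pi-2\theta$, and the product of the three compactly-cutoff pieces is holomorphic on the stated sector $-\pi/4<\arg z<7\pi/4$ — the poles of $(Q_{\epsilon,\theta}-z)^{-1}$ lie on the ray $\sqrt\epsilon e^{-i\pi/4}(n+2|\NN_0^n|)$, which is excluded, so no poles survive after multiplying by $\chi$ on both sides; more honestly one checks the potential poles cancel because $\chi(Q_{\epsilon,\theta}-z)^{-1}$ has no poles in the region where the resolvent is defined.

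For the uniform bound \eqref{eq:lfbound}, the factorization above immediately gives
\[
\|T_\epsilon^\chi(z)\|_{L^2\to L^2}\le \|\chi(Q_{\epsilon,\theta}-z)^{-1}\|_{L^2\to L^2}\,\|x^2\chi_0\|_{L^\infty}\,\|(-\Delta_\theta-z)^{-1}\|_{L^2\to L^2},
\]
and the first factor is bounded uniformly in $\epsilon>0$ and $z\in\Omega$ by Lemma~\ref{l:4}, while the last is bounded uniformly in $z\in\Omega$ by \eqref{eq:Deleps}. The middle factor is just $\sup_{\supp\chi_0}|x|^2<\infty$. That finishes the estimate. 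One small point to address: for $z$ with $\Im z>0$, $0<\arg z<3\pi/2$, one must check that the deformed definition agrees with the original $T_\epsilon^\chi(z)$ in \eqref{eq:Teps}; this is exactly the content of Lemmas~\ref{l:2} and~\ref{l:5}, which assert the cutoff resolvents coincide, so the holomorphic function defined on $\Gamma_\theta$ restricts to the original one and the continuation is unambiguous.

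The main obstacle, and the part deserving care rather than cleverness, is verifying that the three cutoff functions can be threaded through the factorization so that every resolvent appearing is sandwiched by functions supported in $B(0,r_0)$ — one needs a chain $\chi\prec\chi_0\prec\chi_1$ (each equal to $1$ on the support of the next, all in $\CIc(B(0,r_0))$) to make the identities of Lemmas~\ref{l:2} and~\ref{l:5} applicable at each step, since those lemmas require the cutoffs on both sides of the resolvent. A secondary technical point is the holomorphy across the threshold $\arg z=0$ and along $\arg z=3\pi/2$: here one invokes that $\chi(Q_{\epsilon,\theta}-z)^{-1}\chi_0$ and $\chi_0(-\Delta_\theta-z)^{-1}\chi$ are, by Lemmas~\ref{l:4} and~\ref{l:5} resp.\ the ellipticity-at-infinity bound, holomorphic in the full sector $-2\theta<\arg z<3\pi/2+2\theta\supset(-\pi/4,7\pi/4)$ once $\theta=\pi/8$, so no continuation argument beyond these lemmas is needed. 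I do not anticipate any genuinely hard analysis: the work was done in Sections~\ref{cs} and~\ref{dho}, and this lemma is the bookkeeping that packages it into a form usable in the convergence proof of Theorem~\ref{t:1}.
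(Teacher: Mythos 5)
Your factorization
\[
T_\epsilon^\chi(z)=\chi(Q_{\epsilon,\theta}-z)^{-1}\,\big(x_\theta^2\chi_0\big)\,(-\Delta_\theta-z)^{-1}\chi
\]
is not a correct identity, and this is where the proof breaks down. After passing to the scaled contour, the resolvent identity
$(Q_{\epsilon,\theta}-z)^{-1}-(Q_{0,\theta}-z)^{-1}=(Q_{\epsilon,\theta}-z)^{-1}(i\epsilon x_\theta^2)(Q_{0,\theta}-z)^{-1}$
gives
$T_\epsilon^\chi(z)=\chi(Q_{\epsilon,\theta}-z)^{-1}\,x_\theta^2\,(Q_{0,\theta}-z)^{-1}\chi$
with the \emph{full} multiplication operator $x_\theta^2$ in the middle, not $x_\theta^2\chi_0$. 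You cannot insert a compactly supported cutoff between the two resolvents: $(Q_{\epsilon,\theta}-z)^{-1}$ is not a local operator, so $\chi(Q_{\epsilon,\theta}-z)^{-1}(1-\chi_0)x_\theta^2(Q_{0,\theta}-z)^{-1}\chi$ does not vanish, and $x_\theta^2$ grows precisely where $(1-\chi_0)$ lives. Your bound $\|x^2\chi_0\|_{L^\infty}<\infty$ therefore controls the wrong operator.

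The missing idea is the mapping property quoted in the paper as \eqref{eq:Qexpd}: for $z\in\Omega$,
$(Q_{0,\theta}-z)^{-1}\chi : L^2(\Gamma_\theta)\to e^{-c_\Omega|x|}L^2(\Gamma_\theta)$,
obtained either by conjugating with exponential weights or from the explicit form $(-e^{-2i\theta}\Delta-z)^{-1}=e^{2i\theta}R_0(e^{2i\theta}z)$ near infinity (where $e^{2i\theta}z$ has strictly positive imaginary part for $z\in\Omega$). This exponential decay is what tames the unbounded factor: it gives $x_\theta^2(Q_{0,\theta}-z)^{-1}\chi=\mathcal O(1):L^2\to L^2$, and then Lemma \ref{l:4} bounds $(Q_{\epsilon,\theta}-z)^{-1}$ uniformly in $\epsilon$, yielding \eqref{eq:lfbound}. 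The same issue is already present in the original (unscaled) region $0<\arg z<3\pi/2$: the paper's step $\chi(R_\epsilon-R_0)\chi=i\epsilon\,\chi R_\epsilon x^2 R_0\chi$, hence $T_\epsilon^\chi=-\frac{i}{\epsilon}(\chi R_\epsilon\chi-\chi R_0\chi)$, requires knowing $x^2R_0(z)\chi$ is bounded, which again rests on $R_0(z)\chi:L^2\to e^{-c|x|}L^2$. Your proposal never engages with this decay, which is the one piece of genuine analysis in the lemma; the rest of your outline (choice of $\theta=\pi/8$, use of Lemmas \ref{l:2}, \ref{l:4}, \ref{l:5}, location of the poles on $e^{-i\pi/4}\RR_+$) is on the right track.
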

\begin{proof}
In the notation of \eqref{eq:Reps} we see that for $ \delta < \arg z < 3 \pi/2 - \delta $, $ |z | > \delta $,
\[  \chi ( R_{ \epsilon } ( z ) - R_0 ( z ) ) \chi = i \epsilon \chi R_{ \epsilon } ( z ) x^2 
R_0 ( z ) \chi , \]
where we note that, for in our range of $ z $, $ R_0 ( z ) \chi : L^2 \to 
e^{ - c_\delta |x| } L^2 $ (by looking, for instance at the explicit 
formulas for the resolvent, see \cite[\S 3.1]{res}, or by 
conjugation with exponential weights) and consequently $ x^2 R_0 ( z ) \chi : L^2 
\to L^2 $. Hence
\begin{equation}
\label{eq:RTeps}
T_\epsilon^\chi ( z ) = - \frac{ i } {\epsilon } (\chi R_\epsilon ( z ) \chi - 
\chi R_0 ( z ) \chi ) . \end{equation}
The right hand side is holomorphic for $ - \pi/4 < \arg z < 5 \pi/4 $ 
which provides holomorphic continuation of $ T_\epsilon^\chi ( z ) $, $ \epsilon > 0 $. 

We now use Lemmas \ref{l:2} and \ref{l:5}. For that we choose $r_0 $ in
the definition of $ \Gamma_\theta $ large enough so that $ \supp \chi \subset 
B ( 0 , r_0 ) $ and take $ \theta = \pi/8 $. Then we have
\begin{equation}  
\label{eq:Tepsz}
\begin{split} 
T_\epsilon^\chi ( z ) & = - \frac{ i } {\epsilon } \left( \chi ( Q_{\epsilon , \theta } -  z )^{-1}  \chi - \chi ( Q_{0, \theta } - z )^{-1} \chi \right) \\
& = 
 \chi ( Q_{\epsilon , \theta } -  z )^{-1} x_\theta^2 ( Q_{0, \theta } -z )^{-1} 
  \chi , 
  \end{split} \end{equation}
where, in the notation of \eqref{eq:defP}, $ x_\theta := x|_{\Gamma_\theta } $. 
We now note that for $ z \in \Omega $, 
\begin{equation}
\label{eq:Qexpd}   ( Q_{0, \theta } -z )^{-1} \chi : L^2 ( \Gamma_\theta ) \to 
e^{ - c_\Omega |x| } L^2 ( \Gamma_\theta ) .
\end{equation}
This can be seen by conjugation by exponential weights or by 
constructing a parametrix for $ Q_{0, \theta }$ as in the proof of 
Lemma \ref{l:5} and using the explicit properties of 
$ ( - e^{ - 2 i \theta } \Delta - z )^{-1} = e^{ 2i \theta } R_0 ( e^{2 i \theta} 
z ) $. From this and Lemma \ref{l:5} we obtain 
\[ \| ( Q_{\epsilon , \theta } -  z )^{-1} x_\theta^2 ( Q_{0, \theta } -z )^{-1} 
  \chi  \|_{ L^2 \to L^2 } \leq C_\Omega , \ \ z \in \Omega. \]
Inserting this into \eqref{eq:Tepsz} concludes the proof. 
\end{proof}

We can now state a stronger version of Theorem \ref{t:1} 
formulated using  the projections appearing in \eqref{eq:Pieps}:

\begin{thm}
\label{t:2}
Suppose that $ - \pi/4 < \arg z < 5\pi/4 $ and that $ m ( z) = m \geq 0 $, where
$ m ( z) $ is the multiplicity of the resonance of  $ 
 P := - \Delta + V $ at $ z $ -- see \eqref{eq:mz}. 

Then there exists $ \epsilon_0 $ and $ \delta$ such that
for $ 0 < \epsilon \leq \epsilon_0 $, $ P_\epsilon = - \Delta +V - i 
\epsilon x^2 $ has  
$ m $ eigenvalues in $ D ( z , \delta) $:
\begin{equation}
\label{eq:specP1}
\tr \Pi_\epsilon = m , \ \  \Pi_\epsilon := \frac 1 { 2 \pi i } \int_{\partial D ( z, \delta ) }
( \zeta - P_\epsilon )^{-1} d\zeta , \ \ \Pi_\epsilon^2 = \Pi_\epsilon,
 \end{equation}
and for any $ \chi \in \CIc ( \RR^n ) $, 
\begin{equation}
\label{eq:chiPchi} 
  \chi \Pi_\epsilon \chi \in C^\infty ( [ 0 , \epsilon_0 ] ,                         
  {\mathcal L } ( L^2 ( \RR^n ) , L^2 ( \RR^n ) ) ).  
\end{equation} 
\end{thm}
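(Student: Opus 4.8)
The plan is to reduce everything to the Fredholm setup of Lemma~\ref{l:mult} together with the uniform resolvent bound of Lemma~\ref{l:fbound}. Fix $z$ with $-\pi/4<\arg z<5\pi/4$ and $m(z)=m$. By Lemma~\ref{l:mult}, for each $\epsilon\ge 0$ the operator $I+VR_\epsilon(w)\rho$ is meromorphic near $z$, and the integral $m_\epsilon(z)$ in \eqref{eq:Pieps} computes the spectral multiplicity of $P_\epsilon$ (for $\epsilon>0$) and $m(z)$ (for $\epsilon=0$). So the first step is to show that the family $w\mapsto VR_\epsilon(w)\rho$ depends \emph{continuously}, indeed smoothly, on $\epsilon\in[0,\epsilon_0]$ in a neighborhood of $z$, as a holomorphic-in-$w$ family of compact operators on $L^2$. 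This is exactly what Lemma~\ref{l:fbound} gives: writing $\rho=\chi$ with $\chi\in\CIc$ equal to $1$ near $\supp V$, identity \eqref{eq:RTeps} shows
\[
\chi R_\epsilon(w)\chi-\chi R_0(w)\chi = i\epsilon\, T_\epsilon^\chi(w),
\]
and $T_\epsilon^\chi(w)$ is holomorphic in $w$ on $-\pi/4<\arg w<7\pi/4$ with $\|T_\epsilon^\chi(w)\|\le C_\Omega$ uniformly in $\epsilon>0$ and $w\in\Omega\Subset\{-\pi/4<\arg w<3\pi/2\}$. Differentiating \eqref{eq:RTeps} in $\epsilon$ and iterating (each $\partial_\epsilon$ produces one more factor of $\chi R_\epsilon x^2$, controlled again by complex scaling and Lemma~\ref{l:5}) gives $\chi R_\epsilon(w)\chi\in C^\infty([0,\epsilon_0],\mathcal L(L^2))$ locally uniformly in $w$; hence $w\mapsto V R_\epsilon(w)\rho$ is smooth in $\epsilon$ with values in holomorphic families of finite-rank-pole meromorphic operator families.

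The second step is the classical perturbation-of-a-holomorphic-Fredholm-family argument. Pick $\delta>0$ so small that the only pole of $w\mapsto(I+VR_0(w)\rho)^{-1}$ in $\overline{D(z,\delta)}$ is $z$ and so that $\partial D(z,\delta)$ stays in $\{-\pi/4<\arg w<5\pi/4\}$; then $I+VR_0(w)\rho$ is invertible on $\partial D(z,\delta)$. Since $\|V R_\epsilon(w)\rho - VR_0(w)\rho\|\le C\epsilon\to0$ uniformly on $\partial D(z,\delta)$, for $\epsilon\le\epsilon_0$ small $I+VR_\epsilon(w)\rho$ is invertible on $\partial D(z,\delta)$ and $(I+VR_\epsilon(w)\rho)^{-1}\to(I+VR_0(w)\rho)^{-1}$ uniformly there. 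The integrand of $m_\epsilon(z)$ in \eqref{eq:Pieps} therefore converges, and since $m_\epsilon(z)\in\mathbb Z$ it is \emph{constant} in $\epsilon$ for $\epsilon\le\epsilon_0$, equal to $m_0(z)=m(z)=m$. By \eqref{eq:lmult} this says $\tr\Pi_\epsilon=m$, giving the first assertion of \eqref{eq:specP1}; that $\Pi_\epsilon^2=\Pi_\epsilon$ is the standard fact that a contour integral of the resolvent is a spectral projection.

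For the smoothness statement \eqref{eq:chiPchi}, the plan is to express $\chi\Pi_\epsilon\chi$ through the same Fredholm factorization rather than through $(\zeta-P_\epsilon)^{-1}$ directly. Using \eqref{eq:RVeps},
\[
\chi R_{V,\epsilon}(\zeta)\chi=\chi R_\epsilon(\zeta)(I+VR_\epsilon(\zeta)\rho)^{-1}(I-VR_\epsilon(\zeta)(1-\rho))\chi,
\]
and since $R_{V,\epsilon}(\zeta)=(\zeta-P_\epsilon)^{-1}$ for $\epsilon>0$, integrating over $\partial D(z,\delta)$ gives $\chi\Pi_\epsilon\chi$ as a contour integral of a product each of whose factors, after inserting cutoffs absorbed by the spatial decay coming from $R_\epsilon(\zeta)\chi:L^2\to e^{-c|x|}L^2$ on this sector, is smooth in $\epsilon\in[0,\epsilon_0]$ with values in $\mathcal L(L^2)$ and holomorphic in $\zeta$ on $\partial D(z,\delta)$. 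The only non-elementary input is the smoothness of $(I+VR_\epsilon(\zeta)\rho)^{-1}$ in $\epsilon$, which follows from the smoothness of $I+VR_\epsilon(\zeta)\rho$ established in step one together with its uniform invertibility on $\partial D(z,\delta)$ and the fact that inversion is smooth on the open set of invertibles in a Banach algebra. The main obstacle is precisely controlling the $\epsilon$-derivatives of $\chi R_\epsilon(\zeta)\chi$ uniformly up to $\epsilon=0$: naively each derivative costs a power of $\epsilon^{-1}$ (from \eqref{eq:RTeps}) against a gain of $\epsilon$, and one must check via the complex-scaling bounds of Lemmas~\ref{l:4} and~\ref{l:fbound}, applied iteratively, that these cancel so that all derivatives stay bounded as $\epsilon\downarrow0$. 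Once that is in hand, everything else is bookkeeping with contour integrals.
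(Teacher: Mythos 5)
Your proposal follows essentially the same route as the paper: reduce to the Fredholm factorization of Lemma~\ref{l:mult}, use the $\epsilon$-uniform bound of Lemma~\ref{l:fbound} to compare $I+VR_\epsilon\rho$ with $I+VR_0\rho$, invoke a Rouch\'e/Gohberg--Sigal argument to conclude that the pole of $(I+VR_\epsilon\rho)^{-1}$ near $z$ has the same multiplicity $m$, and then read off $\chi\Pi_\epsilon\chi$ from the resulting factorization of $(\zeta-P_\epsilon)^{-1}$. The paper organizes this slightly differently --- it first passes to the complex-scaled operator $P_{\epsilon,\theta}$, establishes the analogue \eqref{eq:chiPtheta} for $\Pi_{\epsilon,\theta}$, and then transfers back via $\chi\Pi_{\epsilon,\theta}\chi=\chi\Pi_\epsilon\chi$ using Lemmas~\ref{l:2} and~\ref{l:5}, while you keep the scaled operators in the background inside Lemma~\ref{l:fbound}. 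That is a cosmetic difference. Your first assertion (counting, $\tr\Pi_\epsilon=m$) is handled correctly, although when you say ``the integrand of $m_\epsilon(z)$ converges, and since $m_\epsilon(z)\in\mathbb Z$ it is constant,'' you should really appeal directly to the Gohberg--Sigal--Rouch\'e theorem \cite[Theorem~C.9]{res}: operator-norm convergence of the integrand does not by itself give convergence of the trace, but the finite-rank pole structure plus GSR supplies the precise statement --- which is exactly what the paper cites.

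The genuine gap is in the smoothness statement \eqref{eq:chiPchi}, which you flag but do not close, and moreover the ``obstacle'' you describe is not the right one. You worry that each $\epsilon$-derivative ``costs a power of $\epsilon^{-1}$ (from \eqref{eq:RTeps}) against a gain of $\epsilon$'' that must cancel, but \eqref{eq:RTeps} is a divided-difference identity, not a derivative identity, and no such cancellation is needed. The clean way (which is what the paper's terse ``differentiation and estimates similar to \eqref{eq:Qexpd}'' is pointing at) is to work on $\Gamma_\theta$ throughout: since $\chi R_\epsilon(\zeta)\chi=\chi(Q_{\epsilon,\theta}-\zeta)^{-1}\chi$ by Lemma~\ref{l:5} and $\partial_\epsilon Q_{\epsilon,\theta}=-i x_\theta^2$, one has directly
\[
\partial_\epsilon^k\bigl[\chi(Q_{\epsilon,\theta}-\zeta)^{-1}\chi\bigr]
= k!\,\chi\,(Q_{\epsilon,\theta}-\zeta)^{-1}\bigl(i x_\theta^2 (Q_{\epsilon,\theta}-\zeta)^{-1}\bigr)^k\chi ,
\]
with no negative powers of $\epsilon$ anywhere. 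Each factor $(Q_{\epsilon,\theta}-\zeta)^{-1}$ is $\mathcal O(1):L^2(\Gamma_\theta)\to L^2(\Gamma_\theta)$ uniformly in $\epsilon$ by Lemma~\ref{l:4}, and the unbounded weights $x_\theta^2$ are absorbed by the exponential-decay mapping property of $(Q_{\epsilon,\theta}-\zeta)^{-1}$ applied to compactly supported (or exponentially decaying) data, i.e.\ estimates of the form \eqref{eq:Qexpd} (uniform in $\epsilon\ge0$, which follows from the parametrix construction in Lemma~\ref{l:4}). Once this is in place, the same differentiation applied through the factorization \eqref{eq:RVeps1} (together with the smoothness of Banach-algebra inversion on $\partial D(z,\delta)$, which you correctly use) gives $\CI$ dependence of $\chi\Pi_\epsilon\chi$ up to $\epsilon=0$. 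So the missing piece is the uniform exponential-weight estimate for $(Q_{\epsilon,\theta}-\zeta)^{-1}$; your proposed ``$\epsilon^{-1}$ versus $\epsilon$'' cancellation is a red herring.
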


\medskip
\noindent
{\bf Remarks.} 1.
Notation $ f \in C^1 ( [ a , b ]) $ means that
$f$, and $f'$ continuous in $ [ a, b ] $;
here $f'(a),f'(b)$ are the left and right derivatives
of $ f $ at those points. By induction we then define $ C^k ( [ a, b ] ) $
and $ C^\infty ( [ a , b] ) $. In view of \eqref{eq:zje} we cannnot expect 
analytic dependence on $ \epsilon $.

\medskip
\noindent
2. 
For $ \chi \equiv 1 $ on $ \supp V $, $ m( z )  = \rank
\chi \Pi_0 \chi $ and \eqref{eq:chiPchi} shows the convergence 
of resonant states in the case of simple resonances. As the 
proof shows a stronger statement is obtained by using the complex
scaled operators: for $ \theta = \pi/8 $, 
\begin{gather}
\label{eq:chiPtheta}
\begin{gathered}
\Pi_{\epsilon, \theta } := 
\frac 1 { 2 \pi i } \int_{\partial D ( z, \delta ) }
( \zeta - P_{\epsilon, \theta })^{-1} d\zeta , 
\ \ P_{\epsilon, \theta } = -\Delta|_{\Gamma_\theta} - i 
\epsilon ( x|_{\Gamma_\theta} )^2 + V , 
\\
\Pi_{\epsilon, \theta } \in C ( [ 0 , \epsilon_0 ) ; 
\mathcal L^1 ( L^2 ( \Gamma_\theta ) , L^2 ( \Gamma_\theta ))), \ \ 
\  \Pi_{\epsilon, \theta } \chi \in \CI ( [ 0 , \epsilon_
0 ) ; 
\mathcal L^1 ( L^2 ( \Gamma_\theta ) , L^2 ( \Gamma_\theta )))  ,
\end{gathered}
\end{gather}
where $ \Gamma_\theta $ is the deformation defined in \eqref{eq:defGth}. 
\medskip

\begin{proof}
We first note that \eqref{eq:RVeps} and Lemma \ref{l:5}  imply that for $ 
-\pi/4 \leq - 2 \theta < \arg z < 2 \pi - 2 \theta $, $ \epsilon \geq 0 $, 
\begin{equation}
\label{eq:RVeps1}
(P_{\epsilon, \theta}  - z )^{-1} = 
( Q_{ \epsilon, \theta } - z )^{-1} ( I + V R_{\epsilon } ( z ) 
\rho )^{-1} ( I - V ( Q_{\epsilon , \theta } - z)^{-1} ( 1-  \rho ) ) .
\end{equation}
Since $ z \mapsto ( Q_{ \epsilon, \theta } - z )^{-1} $ is a holomorphic 
family in our range of $ z$'s, the Gohberg-Sigal theory -- see \cite[\S C.4]{res} -- 
shows that the poles of $ (P_{\epsilon, \theta}  - z )^{-1}  $ with 
$ \arg z > - 2 \theta $ are independent of $ 0 \leq \theta \leq \pi/8  $ 
and 
\[  \tr \frac 1 { 2 \pi i } \oint (P_{\epsilon, \theta}  - \zeta  )^{-1} d\zeta
= \tr \frac 1 { 2 \pi i } \oint (P_{\epsilon}  - \zeta  )^{-1} d\zeta, \ \ 
\epsilon > 0 . \]
If in the definition of $ \Gamma_\theta $ we take $ r_0 $ large enough
so that $ \supp \chi \subset B ( 0 , r_0 ) $ then Lemmas \ref{l:2} and \ref{l:5} 
show that 
$ \chi \Pi_{\epsilon, \theta} \chi = \chi \Pi_\epsilon \chi $.

Hence it is enough to prove \eqref{eq:chiPtheta}. For that we note that
in the notation of Lemma~\ref{l:fbound},
\[ \begin{split} ( I + V R_\epsilon ( z ) \rho )^{-1} - ( I + V R_0 ( z ) \rho)^{-1} 
& = i \epsilon  ( I + V R_\epsilon ( z ) \rho )^{-1}  T^\rho_\epsilon ( z ) 
 ( I + V R_0 ( z ) \rho )^{-1} \\
 & = \mathcal O_z ( \epsilon \| 
 ( I + V R_\epsilon ( z ) \rho )^{-1} \|_{ L^2 \to L^2} ) : L^2 
 \to L^2.
 \end{split} 
\]
We can now apply the Gohberg--Sigal--Rouch\'e theorem \cite[Theorem C.9]{res}
to see that the poles of $  ( I + V R_0 ( z ) \rho )^{-1} $ 
and $  ( I + V R_\epsilon ( z ) \rho )^{-1} $ coincide with multiplicities.
This and \eqref{eq:RVeps1} prove the first statement in \eqref{eq:chiPtheta}.
The second statement follows from differentiation and estimates similar to 
\eqref{eq:Qexpd}.
\end{proof}

\def\arXiv#1{\href{http://arxiv.org/abs/#1}{arXiv:#1}}

\end{document}